\newcommand{\pacap}{{R_{ij}^{\min}}}
\newcommand{\Cin}{{R_{ij}^{\text{in}}}}
\newcommand{\bound}{{F^{\max}}}
\newcommand{\TB}{{\text{BP-T}}}
\newcommand{\OR}{\boldsymbol\lambda{\text{--OR}}}
\newcommand{\BPa}{{\text{BP}}}
\newcommand{\BPr}{{\text{BP-O}}}
\newcommand{\BPs}{{\text{BP-SP}}}
\newcommand{\driftK}{{\Delta^{\TB}_K(t)}}
\newcommand{\mean}[1]{\mathbb{E}\!\left[#1\right]}
\newcommand{\argmax}{\arg\max}
\newcommand{\mat}[1]{\mathbf{#1}}
\newtheorem{theorem}{Theorem}
\newtheorem{lemma}[theorem]{Lemma}
\newtheorem{definition}{Definition}
\author{Georgios S. Paschos and Eytan Modiano}
\title{Throughput Optimal Routing in Overlay Networks}
\begin{document}
\maketitle

\begin{abstract}
Maximum throughput requires path diversity enabled by  bifurcating traffic at different network nodes.
In this work, we consider a network where traffic bifurcation is allowed only at  a subset of nodes called \emph{routers}, while the rest nodes (called \emph{forwarders}) cannot bifurcate traffic and hence only forward packets on specified paths.
This implements an overlay network of routers where each overlay link corresponds to a path in the physical network.
We study  dynamic routing implemented at the overlay. 
We develop a queue-based policy, which is  shown to be maximally stable (throughput optimal) for a restricted class of network scenarios
where overlay links do not correspond to overlapping physical paths.
Simulation results show that our policy yields better delay over dynamic policies that allow bifurcation at all nodes, such as the backpressure policy.
Additionally, we provide a heuristic extension of  our proposed overlay routing scheme for the unrestricted class of networks.

%
%
%
%
%

\end{abstract}


\section{Introduction}

A common way to route data in communication networks is shortest path routing. 
Routing schemes using shortest path are \emph{single-path}; they route all packets of a  session through the same dedicated path. 
Although single-path schemes thrive because of their simplicity, they  are in general throughput suboptimal.
Maximizing network throughput requires \emph{multi-path routing}, where the different paths are used to provide diversity \cite{FF62}.

When the network conditions are time-varying or when the session demands fluctuate unpredictably, 
it is required to balance the traffic over  the available paths using a \emph{dynamic routing} scheme which adapts to changes in an online fashion.
In the past,  schemes such as \emph{backpressure} \cite{TE92} have been proposed to discover multiple paths dynamically and mitigate the effects of network variability. 
Although backpressure is desirable in many applications, its practicality is limited by the fact that it requires all nodes in the network to  make online routing decisions.
Often it is the case that some network nodes have limited capabilities and cannot perform such actions.
 \emph{In this paper we study dynamic routing  when decisions can be made only at a subset of nodes, while the rest nodes use fixed single-path routing rules.}

Network overlays are frequently used to deploy new communication architectures in legacy networks~\cite{PetersonDavie}.
To accomplish this, messages from the new technology are encapsulated in the legacy format, allowing the two methods to coexist in the legacy network.
Nodes equipped with  the new technology are then connected in a conceptual network overlay, Fig.~\ref{fig:intro}.
%
Prior  works have considered the use of  this methodology  to introduce new  routing capabilities in the Internet.
For example, content providers use overlays to balance the traffic across different Internet paths and improve resilience and end-to-end performance \cite{Andersen2001,J_Sitaraman_14}.
In our work we use a network overlay  to introduce dynamic routing to a legacy network which operates based on single-path routing.
Nodes that implement the overlay layer are called \emph{routers} and are able to make online routing decisions, bifurcating traffic along different paths. The rest nodes, called \emph{forwarders}, rely on a single-path routing protocol which is available to the physical network, see Fig.~\ref{fig:intro}.

\begin{figure}[t!]
\begin{center}
\includegraphics[width=0.8\columnwidth]{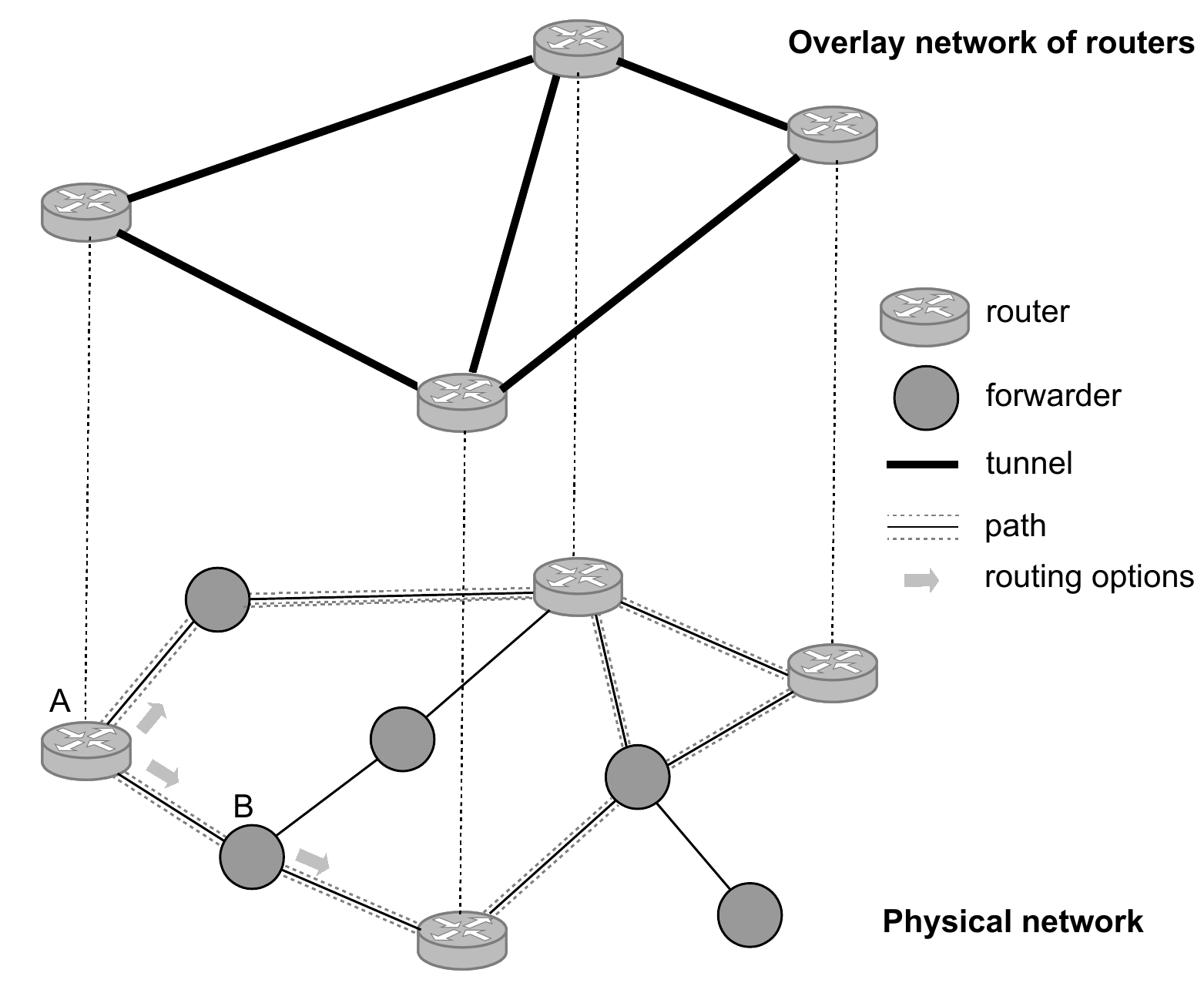}
\end{center}
\vspace{-0.2in}
\caption{ Router A can bifurcate traffic while forwarder B only forwards the packets along a predetermined path.
This paper studies dynamic routing in the overlay.
\vspace{-0.3in}
}
\label{fig:intro}
\end{figure}

There are many applications of our overlay routing model. 
 For  networks with heterogeneous technologies, the overlay routers correspond to devices with extended capabilities, while the  forwarders correspond to less capable devices. 
 For example, to introduce dynamic routing in a network running a legacy routing protocol, it is possible to use Software Defined Networks to install dynamic routing functions on a subset of devices (the routers). 
  In the paradigm of multi-owned networks,  the forwarders are devices where the vendor has no administrative rights.
For example consider a network that uses leased satellite links, where the forwarding rules may be pre-specified by the lease.
In such heterogeneous scenarios,  maximizing throughput by controlling only a fraction of nodes introduces a tremendous  degree of flexibility. 

In the physical network $\mathcal{G}=(\mathcal{N},\mathcal{L})$ denote the set of routers with ${\cal V\subseteq N}$.
Also, denote the throughput region of this network with $\Lambda({\cal V})$ \cite{georgiadis}.\footnote{
The definition of throughput region is given later; here it suffices to think of the set of feasible throughputs.}
Then, $\Lambda({\cal N})$ is the throughput of the network when all nodes are  routers. We call this the full throughput of $\mathcal{G}$, and it can be  achieved if all nodes run the  backpressure  policy \cite{TE92}.
Also, $\Lambda({\cal \emptyset})$ is the throughput of a network consisting only of forwarders, which is equivalent to single-path throughput.
Since increasing the number of routers increases path diversity, we generally have $\Lambda({\cal \emptyset})\subseteq \Lambda({\cal V})\subseteq \Lambda({\cal N})$.
Prior work studies the necessary and sufficient conditions for router set 
$\cal V^*$ to  guarantee full throughput, i.e., $\Lambda({\cal V^*})=\Lambda({\cal N})$ \cite{C_jones_14}.
 The results of the study show that using a small percentage of routers ($8\%$) is sufficient  for full throughput  in power-law random graphs--an accurate model of the Internet \cite{NewmanBook}. 
 Although  \cite{C_jones_14}  characterizes the throughput region  $\Lambda({\cal V})$,  a    dynamic routing to achieve this performance is still unknown.
 For example, in the same work it is showcased that backpressure operating in the overlay  is suboptimal.
 \emph{In this work we fill this gap under a specific topological  assumption explained in detail later. We study dynamic routing in the overlay network of  routers  and propose a control policy that achieves $\Lambda({\cal V})$. Our work is the first to analytically study such a heterogeneous dynamic routing policy and prove its optimality.}

\section{System Model}

We consider a physical network $\mathcal{G}=(\mathcal{N},\mathcal{L})$ where the nodes are partitioned to  routers  ${\cal V}$ and forwarders ${\cal N-V}$.
The physical network has installed single-path routing rules,
which we capture as follows.
Every router $i\in{\cal V}$ is assigned an acyclic path $p_{ij}$ to every other router $j\in{\cal V}$.\footnote{The legacy routing protocol may provide paths between physical nodes as well, but we do not study them in this work.}
 Fig.~\ref{fig:model} (left) shows with bold arrows both paths assigned to router $\textsf{a}$
 , i.e., $(\textsf{a,d,e})$,  and $(\textsf{a,b,c})$. Let $P$ be the set of all such  paths in the network.

\begin{figure}[t!]
\begin{center}
\includegraphics[width=0.46\columnwidth]{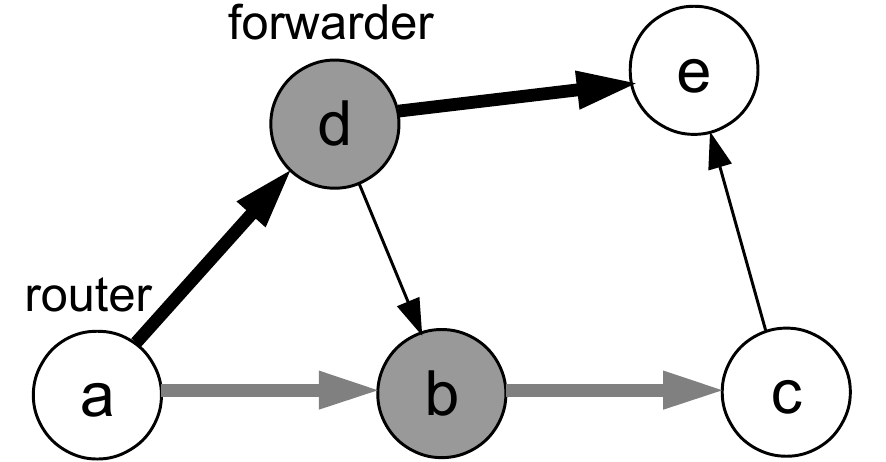}
\includegraphics[width=0.46\columnwidth]{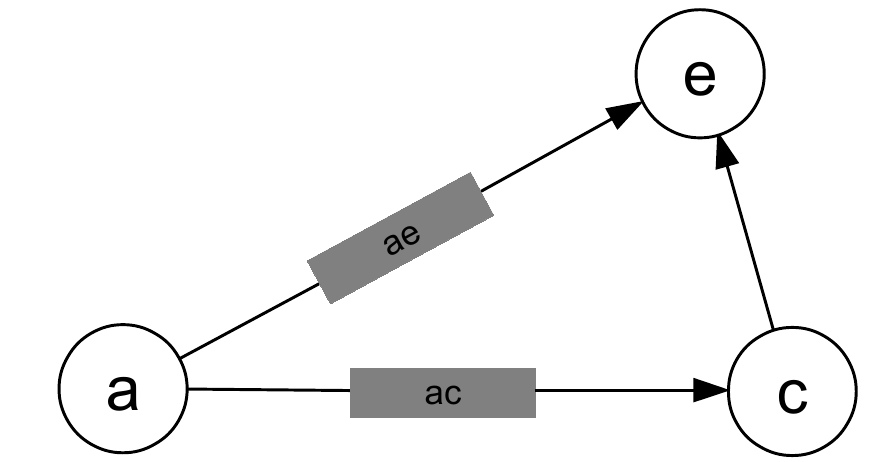}
\end{center}
\caption{(left) An example network of routers and forwarders, where routers are ${\cal V}=\{\textsf{a,c,e}\}$. We indicate with bold arrows the shortest paths available to \textsf{a} by the  single-path routing scheme of the physical network.  (right) The equivalent overlay network of routers and tunnels. 
}
\vspace{-0.35in}
\label{fig:model}
\end{figure}

\subsection{The Overlay Network of Tunnels}

We introduce the concept of \emph{tunnels}.
The  tunnel $(i,j)\in{\cal E}$ corresponds to a path $p_{ij}\in P$ with end-points routers $i,j$ and intermediate nodes  forwarders.
We then  define the overlay network  $\mathcal{G}_R=(\mathcal{V},\mathcal{E})$ consisting of routers $\mathcal{V}$  and tunnels $\mathcal{E}$. 
Figure~\ref{fig:model} (right) depicts  the overlay network  for the physical network in the left, assuming shortest path routing is used.

\subsubsection{Topological Assumption}
\label{sec:overlaping}

In this work we study the case of \emph{non-overlapping tunnels}. 
Let ${\cal T}_{ij}$ be the set of all physical links of tunnel $(i,j)$ with the exception of the first input link.
\begin{definition}[Non-Overlapping Tunnels]
An overlay network  satisfies the non-overlapping tunnels condition if for any two tunnels $e_1\neq e_2$ we have ${\cal T}_{e_1}\cap {\cal T}_{e_2}=\emptyset$. 
\end{definition}
Whether the condition is satisfied or not, depends on the network topology ${\cal G}$, the set of routers ${\cal V}$, and the set of paths $P$ which altogether determine ${\cal T}_{ij}$, for all $i,j\in {\cal V}$.
The network of Figure \ref{fig:model} satisfies the non-overlapping tunnels condition since each of the links $\textsf{(d,e)}, \textsf{(b,c)}$  belongs to exactly one tunnel. On the other hand, in the network of Figure \ref{fig:model2}  link  $\textsf{(c,d)}$    belongs to two  tunnels, hence the condition is not satisfied.

When tunnels overlap,
packets belonging to different tunnels compete for service at
the forwarders, which further complicates the analysis.
Our analytical results  focus exclusively on the non-overlapping tunnels case which still constitutes an interesting and difficult problem.
However, in the simulation section we heuristically extend our proposed policy to apply to general networks with overlapping tunnels  and showcase that the extended policy has near-optimal performance.

\begin{figure}[t!]
\begin{center}
\includegraphics[width=0.46\columnwidth]{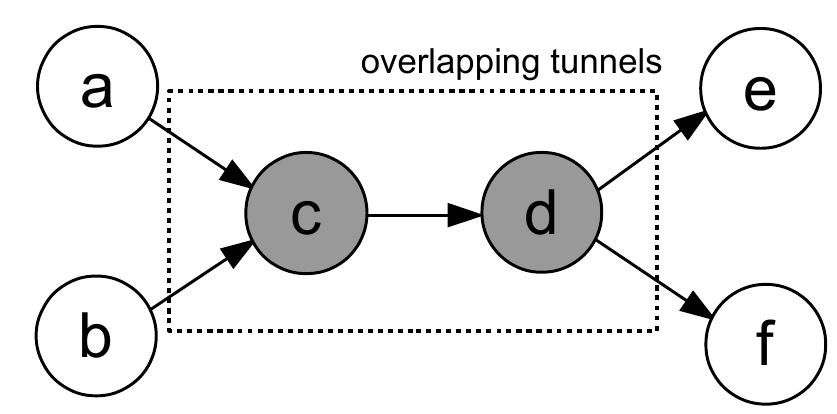}
\includegraphics[width=0.46\columnwidth]{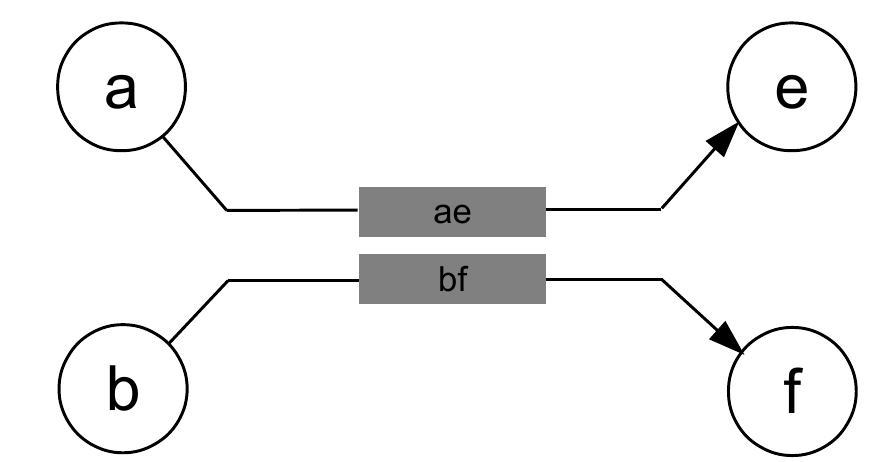}
\end{center}
\vspace{-0.2in}
\caption{An example with overlapping tunnels.
}
\label{fig:model2}
\end{figure}

\subsection{Overlay Queueing Model}

The overlay network admits a  set of sessions $\mathcal C$, where
 each session has a unique router destination, but possibly multiple router sources.
Time is slotted; at the end of time slot $t$,  $A_i^c(t)\leq A_{\max}$ packets of session $c\in{\cal C}$ arrive exogenously at router $i$, where $A_{\max}$ is a positive constant. 
\footnote{Note that we focus exclusively on routing at the overlay layer. Thus $A_i^c(t)$ are defined at overlay router nodes. }
$A_i^c(t)$ are i.i.d. over slots, independent across sessions and sources, with mean 
$\lambda_i^c$. 

For every tunnel $(i,j)$, a routing policy $\pi$ chooses  the  routing function $\mu_{ij}^c(t,\pi)$ in slot $t$ which determines \emph{the number of session $c$ packets} to be routed  from router $i$ into the tunnel.
Additionally, we denote with $\phi_{ij}^c(t)$ the actual number of session $c$ packets that exit the tunnel in slot $t$.
For a visual association of $\mu_{ij}^c(t,\pi)$ and $\phi_{ij}^c(t)$ to the tunnel links see Figure \ref{fig:functions}.
Note that  $\mu_{ij}^c(t,\pi)$ is decided by router $i$ while  $\phi_{ij}^c(t)$ is uncontrollable.
\begin{figure}[t!]
\begin{center}
\begin{overpic}[scale=.6]{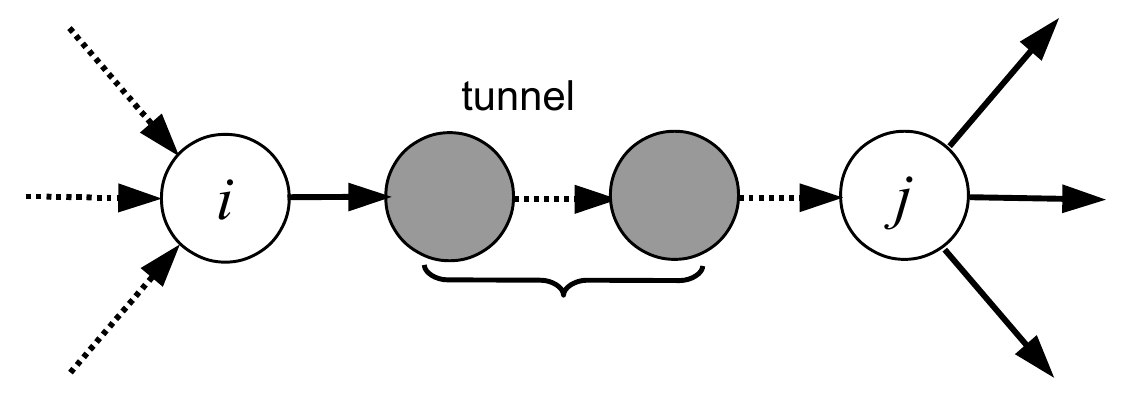}
\put(52,26){\footnotesize $(i,j)$}
\put(26,22){\footnotesize $\mu_{ij}$}
\put(25,13){\footnotesize $R_{ij}^{\text{in}}$}
\put(67,22){\footnotesize$\phi_{ij}$}
\put(17.5,7){\footnotesize$Q_i$}
\put(78,7){\footnotesize$Q_j$}
\put(47,5){\footnotesize$F_{ij}$}
\end{overpic}
\caption{The input of a tunnel is controllable (solid line) but the output is uncontrollable (dotted line).}
\vspace{-0.3in}
\label{fig:functions}
\end{center}
\end{figure}

Let the sets $\text{In}(i), \text{Out}(i)$ represent the  incoming and outgoing neighbors of router $i$ on ${\cal G}_R$.
Packets of session $c$ are stored at router $i$ in a \emph{router queue}. 
Its backlog $Q_i^c(t)$  evolves according to the following equation
\begin{align}\label{eq:Qevol}
Q_i^c(t+1)=\Big(Q_i^c(t)-\hspace{-0.12in}\underbrace{\sum_{b\in \text{Out}(i)}\!\!\!\! \mu_{ib}^{c}(t,\pi)}_{\text{departures}}\Big)^+\!\!\!+\hspace{-0.12in}\underbrace{\sum_{a\in \text{In}(i)}\!\! \phi_{ai}^{c}(t)+A_i^c(t)}_{\text{arrivals}},\raisetag{0.5\baselineskip}
\end{align}
where 
 we use $(.)^+\triangleq \max\{.,0\}$ since there might not be enough packets to transmit.

On tunnel $(i,j)$ we collect all packets into one \emph{tunnel queue} $F_{ij}(t)$ whose evolution satisfies 
%
\begin{align}\label{eq:queueFij}
F_{ij}(t+1)\leq F_{ij}(t)-\underbrace{\sum_c\phi_{ij}^{c}(t)}_{\text{departures}}+\underbrace{\sum_c\mu_{ij}^{c}(t,\pi)}_{\text{arrivals}}, ~\forall (i,j)\in\mathcal{E}.\raisetag{1\baselineskip}
\end{align}
The packets that actually arrive at $F_{ij}(t)$ might be less than $\sum_c\mu_{ij}^{c}(t,\pi)$, hence the inequality  (\ref{eq:queueFij}). 
%
 We remark that $F_{ij}(t)$ is the total number of packets in flight on the tunnel $(i,j)$. 
 Physically these packets are stored at different forwarders along the tunnel. 
 We only keep track of the sum of these physical backlogs since, as we will show shortly, this is sufficient to achieve maximum throughput.

Above  (\ref{eq:Qevol})  assumes that all incoming traffic at router $i$ arrives either from tunnels, or exogenously. It is possible, however, to have an incoming neighbor router $k$ such that $(k,i)$ is a physical link,
a case we purposely omitted in order  to avoid further complexity in the exposition. 
The optimal policy  for this case  can be obtained from our proposed policy by setting the corresponding tunnel queue backlog to zero,  $F_{ki}(t)=0$.

\subsection{Forwarder Scheduling Inside Tunnels}
We assume that inside tunnels  packets are forwarded in a \emph{work-conserving} fashion, i.e., a forwarder does not idle unless there is nothing to send.
Due to work-conservation and the assumption of non-overlapping tunnels,  a tunnel with ``sufficiently many'' packets  has instantaneous  output  equal to its bottleneck capacity. 
Denote by $M_{ij}$  the number of forwarders associated with tunnel $(i,j)$.
Let $R_{ij}^{\max}$ be the greatest capacity among all  physical links associated with tunnel  $(i,j)$
and $\pacap$  the smallest, also let
\begin{equation}\label{eq:T0}T_0\triangleq \max_{(i,j)\in{\cal E}} \left[M_{ij}\pacap+\frac{M_{ij}(M_{ij}-1)}2R_{ij}^{\max}\right].\end{equation}
\begin{lemma}[Output of a Loaded Tunnel]\label{lem:leaky}
Under any control policy $\pi\in\Pi$,  suppose that in time slot $t$ the total tunnel backlog satisfies $ F_{ij}(t)>T_0$, for some $(i,j)\in{\cal E}$, where $T_0$ is defined in \eqref{eq:T0}. 
The instantaneous output of the tunnel satisfies 
\begin{equation}\label{eq:leaky}
\sum_c\phi_{ij}^c(t)=\pacap.
\end{equation}
\end{lemma}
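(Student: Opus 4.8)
The plan is to unfold the single tunnel queue $F_{ij}(t)$ into the physical forwarder backlogs it aggregates, and then argue hop by hop. Fix a tunnel $e=(i,j)$, put $M\triangleq M_{ij}$, number its forwarders $1,\dots,M$ in the direction of the flow, let $c_k$ be the capacity of the physical link leaving forwarder $k$ (so $c_M$ is the final hop into $j$ and $c_0=\Cin$ is the first input link), and let $b_k(t)$ be the backlog held at forwarder $k$ at the start of slot $t$, so that $F_{ij}(t)=\sum_{k=1}^{M}b_k(t)$ and $\pacap\le c_k\le R_{ij}^{\max}$ for every $k$. The first step is the observation that, by the work-conserving assumption and because non-overlapping tunnels forbid any cross-traffic at a forwarder, forwarder $k$ transmits exactly $d_k(t)\triangleq\min\{b_k(t),c_k\}$ packets in slot $t$ under any policy $\pi\in\Pi$; in particular $\sum_c\phi_{ij}^c(t)=d_M(t)$. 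I will also use the usual empty initial condition $b_k(0)=0$, and note that \eqref{eq:queueFij} forces $F_{ij}$ to grow by at most $\Cin\le R_{ij}^{\max}$ per slot, so the hypothesis $F_{ij}(t)>T_0$ can hold only once $t\ge M$ --- which is exactly as far back as the look-back below needs to reach.

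I then prove $d_M(t)=\pacap$ by sandwiching. For the easy inequality $d_M(t)\le\pacap$, let $k^\star$ be the largest index with $c_{k^\star}=\pacap$; a short induction on the position $m$ (with an inner induction on $t$) shows that every forwarder strictly past the bottleneck, $k^\star<m\le M$, satisfies $b_m(t)\le\pacap$ for all $t$: such a forwarder is fed at most $d_{m-1}(t)\le\pacap$ packets per slot (inductively, since $b_{m-1}(t)\le\pacap\le c_{m-1}$ when $m-1>k^\star$, and by the definition of $c_{k^\star}$ when $m-1=k^\star$) while draining at rate $c_m\ge\pacap$, so from an empty start its backlog never exceeds one slot's inflow. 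Hence $d_M(t)=\min\{b_M(t),c_M\}\le\pacap$ --- immediately if $c_M=\pacap$, and through $b_M(t)\le\pacap$ if $c_M>\pacap$.

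The substantive inequality is $d_M(t)\ge\pacap$ when $F_{ij}(t)>T_0$, and here the key idea --- and the step I expect to be the main obstacle --- is a time-reversal (look-back) argument. Suppose for contradiction that $d_M(t)=\min\{b_M(t),c_M\}<\pacap$; since $c_M\ge\pacap$ this already forces $b_M(t)<\pacap$. Whatever forwarder $M-1$ sends in slot $t-1$ is still present at forwarder $M$ in slot $t$, so by the queue recursion $b_M(t)\ge d_{M-1}(t-1)=\min\{b_{M-1}(t-1),c_{M-1}\}$; as $c_{M-1}\ge\pacap$, the bound $b_M(t)<\pacap$ forces $b_{M-1}(t-1)<\pacap$. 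Iterating this one hop per slot yields $b_{M-k}(t-k)<\pacap$ for $k=0,1,\dots,M-1$. Running time forward again and using that a forwarder's backlog can grow by at most one incoming-link capacity, hence at most $R_{ij}^{\max}$, per slot, we get $b_{M-k}(t)<\pacap+k\,R_{ij}^{\max}$, and therefore
\[ F_{ij}(t)=\sum_{k=0}^{M-1}b_{M-k}(t)<\sum_{k=0}^{M-1}\bigl(\pacap+k\,R_{ij}^{\max}\bigr)=M\pacap+\tfrac{M(M-1)}{2}R_{ij}^{\max}\le T_0, \]
contradicting $F_{ij}(t)>T_0$. Hence $b_M(t)\ge\pacap$, so $d_M(t)=\min\{b_M(t),c_M\}\ge\pacap$; combined with the upper bound this gives $\sum_c\phi_{ij}^c(t)=d_M(t)=\pacap$.

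This look-back step is also what explains the otherwise opaque triangular term $\tfrac{M_{ij}(M_{ij}-1)}{2}R_{ij}^{\max}$ in the definition \eqref{eq:T0} of $T_0$: to certify that the last forwarder is fed at the full bottleneck rate in the current slot one may need to trace back as many as $M_{ij}$ slots, and over that window each of the upstream forwarders could have been filling at the maximal link rate. Everything else --- the work-conservation identity $d_k(t)=\min\{b_k(t),c_k\}$, the one-slot growth bound, and the two elementary inductions --- is routine bookkeeping on the queue recursions \eqref{eq:Qevol}--\eqref{eq:queueFij}.
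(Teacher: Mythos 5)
Your proposal is correct and follows essentially the same route as the paper's proof: decompose $F_{ij}(t)$ into per-forwarder backlogs, use work-conservation to get $\min\{b_k,c_k\}$ service, prove the upper bound $\le\pacap$ by induction downstream of the bottleneck, and obtain the lower bound by contradiction via the same roll-back-in-time/roll-forward argument that produces the triangular sum $M\pacap+\tfrac{M(M-1)}{2}R_{ij}^{\max}$. The only cosmetic difference is your explicit remark about the look-back not reaching before slot $0$, which the paper leaves implicit (and which the empty initial condition covers in any case).
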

\begin{proof} The proof  is provided in the Appendix~\ref{app:lem}.\end{proof}
Lemma \ref{lem:leaky}  is a path-wise statement saying that  the tunnel output is equal to the tunnel bottleneck capacity in every time slot that the tunnel backlog exceeds $T_0$.

Notably we haven't discussed yet how the forwarders choose to prioritize packets from different sessions.
Based on Lemma  \ref{lem:leaky} and the results that follow,
 we will establish that independent of the choice of session scheduling policy, there exists a routing policy that maximizes throughput. 
 Furthermore, we  demonstrate by simulations that different forwarding scheduling policies result in the same average delay performance under our proposed routing.
%
Hence, in this paper forwarders are  allowed to use any work-conserving   session scheduling, such as
  FIFO, Round Robin or even strict priorities among sessions.

\section{Dynamic Routing Problem Formulation}

A choice for the routing function $\mu_{ij}^c(t,\pi)$ is considered permissible if it satisfies in every slot the corresponding capacity constraint $\sum_c\mu_{ij}^c(t,\pi)\leq \Cin$, where $\Cin$ denotes the capacity of the input physical link  of tunnel  $(i,j)$, see Fig.~\ref{fig:functions}.
In every time slot, a control policy $\pi$ determines the routing functions $\left(\mu_{ij}^c(t,\pi)\right)$ at every router. Let $\Pi$ be the class of all permissible control policies, i.e., the policies whose sequence of decisions consists of permissible routing functions.

We want to  keep the backlogs small in order to guarantee that the throughput is equal to the arrivals. To keep track of this we define the stability criterion adopted from \cite{georgiadis}.

\begin{definition}[System Stability]
A queue  with backlog $X(t)$  is stable under policy $\pi$ if 
\[
\limsup_{T\to\infty}\frac{1}T\sum_{t=0}^{T-1}\mean{X(t)}<\infty.
\]
The overlay network is stable if all router $(Q_i^c(t))$ and tunnel queues $(F_{ij}(t))$ are stable.
\end{definition}

The \emph{throughput region} $\Lambda({\cal V})$ of class $\Pi$ is defined to be (the closure of) the set of  $\boldsymbol\lambda= (\lambda_i^c)$ for which there exists a policy $\pi\in\Pi$ such that the system is stable. 
Avoiding technical jargon, the throughput region includes all achievable throughputs when implementing dynamic routing in the overlay.
Recall that  throughput depends on the actual selection of routers ${\cal V}$, and that for ${\cal V\subset N}$ it may be the case that
the achievable throughput may be less than the full throughput of ${\cal G}$, i.e., 
 $\Lambda({\cal V})\subset \Lambda({\cal N})$. 
Therefore it is important to clarify that in this work we assume that ${\cal V}$ is fixed and we seek to find a policy that is stable for any  $\boldsymbol\lambda\in\Lambda({\cal V})$, i.e., a policy that is \emph{maximally stable}. 
Such a policy is also called in the literature ``throughput optimal''.

\subsection{Characterization of Throughput Region of Class $\Pi$}\label{sec:region}

The throughput region $\Lambda({\cal V})$ can be characterized as  the  closure of the  set of  matrices $\boldsymbol\lambda = (\lambda_i^c)$ for which there exist nonnegative flow variables  $(f_{ij}^c)$ such that
\begin{eqnarray}
&& \lambda_i^c + \sum_{a\in\mathcal{V}} f_{ai}^c < \sum_{b\in\mathcal{V}} f_{ib}^c, ~~~~ \text{ for all } i\in \mathcal{V}, c\in \mathcal C \label{eq:region1}\\
&& \sum_c f_{ij}^c<\pacap, \hspace{0.615in}  \text{ for all } (i,j), \in \mathcal E,\label{eq:region2}
\end{eqnarray}
where (\ref{eq:region1}) are flow conservation inequalities at routers, (\ref{eq:region2}) are capacity constraints on tunnels, and recall that $\pacap$ is the bottleneck capacity in the tunnel $(i,j)$.
We write
\[
\Lambda({\cal V})=\text{Cl}\{\boldsymbol\lambda~|~\boldsymbol f\geq \boldsymbol 0, \text{ and \eqref{eq:region1}-\eqref{eq:region2} hold} \}.
\]
Note, that the conditions for the stability region $\Lambda(\mathcal{V})$ are the same with the conditions for full throughput $\Lambda(\mathcal{N})$~\cite{georgiadis}, with the difference that the flow variables are defined on the network of routers ${\cal G}_R$ instead of $\cal{G}$. 
Indeed the proof that (\ref{eq:region1})-(\ref{eq:region2}) are necessary and sufficient for stability may be obtained by considering a virtual network where every tunnel is replaced by a virtual link.

Controlling this system in a dynamic fashion amounts to finding a routing policy $\pi^*\in\Pi$ which stabilizes the system for any $\boldsymbol \lambda\in\Lambda({\cal V})$.
\emph{Finding such a policy in the overlay differs significantly from the case of a physical network}, since physical links support immediate transmissions while overlay links are   work-conserving tandem queues which  induce queueing delays.

\section{The Proposed Routing Policy}
\label{sec:main}

As discussed in \cite{C_jones_14}, using backpressure in the overlay may result in poor throughput performance.
In this section we propose the Threshold-based Backpressure ($\TB$) Policy, a distributed policy which performs  online decisions in the overlay.  
$\TB$ is designed to operate the tunnel backlogs close to a threshold. 
This is a delicate balance whereby the tunnel output works efficiently (by Lemma \ref{lem:leaky}) while at the same time the number of packets in the tunnel are upper bounded. 

Consider the threshold
\begin{equation}\label{eq:thres}
T=T_0+\max_{(i,j)}\Cin,
\end{equation}
where $T_0$ is defined in (\ref{eq:T0}) and $\Cin$ 
is the  capacity of input physical link of tunnel $(i,j)$ and thus also the maximum increase of the tunnel backlog in one slot. 
Define the condition:
\begin{equation}\label{eq:C1}
F_{ij}(t)\leq T.
\end{equation}
The reason we use this threshold is that if (\ref{eq:C1}) is false, it follows that both $F_{ij}(t)>T_0$ and $F_{ij}(t-1)>T_0$, and hence we can apply Lemma 1 to both slots $t$ and $t-1$. This is used in the  proof of the main result.

\noindent \rule[0.05in]{3.5in}{0.01in} 

\vspace{-0.03in}
\begin{centering} \textbf{Threshold-based Backpressure ($\TB$) Policy }

\vspace{-0.03in}
\end{centering}
\noindent \rule[0.05in]{3.5in}{0.01in}

At each time slot $t$ and tunnel $(i,j)$, 
 let 
\[
c_{ij}^*\in \argmax_{c\in{\cal C}} Q_i^c(t)-Q_j^c(t),
\]
be a session that maximizes the differential backlog between  routers  $i,j$, ties resolved arbitrarily. 
 Then route into that tunnel
\begin{equation}\label{eq:servf}
\mu_{ij}^{c_{ij}^*}(t,\text{TB})=\left\{\begin{array}{ll}
\Cin & \text{ if  } Q_i^{c_{ij}^*}(t)>Q_j^{c_{ij}^*}(t) \\
& \text{ AND } (\ref{eq:C1}) \text{ is true}\\
& \\
0 & \text{ otherwise} 
\end{array}\right.
\end{equation}
and $\mu_{ij}^{c}(t,\text{\TB})=0,~\forall c\neq c_{ij}^*$. Recall, that $\Cin$ denotes the  capacity of input physical link of tunnel $(i,j)$.
\footnote{If the there are not enough packets to transmit, i.e., $\mu_{ij}^{c_{ij}^*}(t)>Q_i^{c_{ij}^*}(t)$, then we fill the transmissions with dummy non-informative packets.} 

\noindent \rule[0.05in]{3.5in}{0.01in}

$\TB$ is similar to applying backpressure in the overlay, with the striking difference that \emph{no packet is transmitted to a tunnel} if  condition (\ref{eq:C1}) is not satisfied. Therefore the total  tunnel backlog is limited to at most $T$ plus the maximum number of packets that may enter the tunnel in one slot. Formally we have 
\begin{lemma}[Deterministic bounds of $F_{ij}(t)$ under $\TB$]\label{lem:detFb}
Assume that the system starts empty and is operated under $\TB$. Then the tunnel backlogs $\left(F_{ij}(t)\right)$ are uniformly bounded above by 
\begin{equation}\label{eq:Fmax}
F^{\max}\triangleq T+R_{\max}.
\end{equation}
\end{lemma}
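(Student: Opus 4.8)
The plan is to prove the bound by induction on the time slot $t$, establishing that $F_{ij}(t)\le F^{\max}=T+R_{\max}$ holds simultaneously for all tunnels $(i,j)\in\mathcal{E}$. The base case $t=0$ is immediate because the system starts empty, so $F_{ij}(0)=0\le T+R_{\max}$.

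For the inductive step I would fix a tunnel $(i,j)$, assume $F_{ij}(t)\le T+R_{\max}$, and split on the threshold condition (\ref{eq:C1}). First, if (\ref{eq:C1}) holds, i.e.\ $F_{ij}(t)\le T$, then the only injection $\TB$ performs into tunnel $(i,j)$ is bounded by the permissibility constraint $\sum_c\mu_{ij}^c(t,\TB)\le\Cin$, while the departure term $\sum_c\phi_{ij}^c(t)$ in (\ref{eq:queueFij}) is nonnegative, so $F_{ij}(t+1)\le F_{ij}(t)+\Cin\le T+R_{\max}$ (here I use that the input-link capacity $\Cin$, being the capacity of a physical link, is at most the largest physical-link capacity $R_{\max}$). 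Second, if (\ref{eq:C1}) fails, i.e.\ $F_{ij}(t)>T$, then by the definition of $\TB$ in (\ref{eq:servf}) no session is routed into tunnel $(i,j)$ in slot $t$, hence $\sum_c\mu_{ij}^c(t,\TB)=0$, and (\ref{eq:queueFij}) together with nonnegativity of the departures gives $F_{ij}(t+1)\le F_{ij}(t)\le T+R_{\max}$. (If one wanted a sharper statement, note that $F_{ij}(t)>T\ge T_0$ lets Lemma~\ref{lem:leaky} apply, so the tunnel in fact drains by $\pacap$, but this is unnecessary for the stated bound.) In either case $F_{ij}(t+1)\le T+R_{\max}$, which closes the induction.

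I do not anticipate a genuine obstacle; the argument is essentially bookkeeping on the one-step increase of the tunnel queue. The only point worth stating carefully is that a tunnel queue can grow in one slot by at most its input-link capacity $\Cin\le R_{\max}$, and that $\TB$ permits such growth only when $F_{ij}(t)\le T$; hence the backlog can reach at most $T+R_{\max}$, after which the threshold blocks all further injection until it falls back below $T$.
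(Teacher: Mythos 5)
Your proof is correct and is essentially the paper's argument: the paper's one-line proof ("follows from (\ref{eq:C1}) and (\ref{eq:servf})") is exactly the observation you spell out, namely that $\TB$ injects at most $\Cin\le R_{\max}$ packets only when $F_{ij}(t)\le T$ and injects nothing otherwise, so the backlog can never exceed $T+R_{\max}$. Your induction just makes this bookkeeping explicit.
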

\begin{proof} Follows from \eqref{eq:C1} and \eqref{eq:servf}.
\end{proof}
This shows that our policy does not allow the tunnel backlogs to grow beyond $F^{\max}$. To show that our policy efficiently routes the packets is much more involved. It is included  in the proof of the following main result.



\begin{theorem}\label{th:optimality}[Maximal Stability of $\TB$]
\label{th:opti}
Consider an overlay network where 
 underlay forwarding nodes use any work-conserving policy to schedule packets over predetermined paths,  
and  the tunnels are non-overlapping.

The $\TB$ policy is maximally stable:
\[
\Lambda^{\TB}({\cal V})\supseteq \Lambda^{\pi}({\cal V}),~\text{for all}~\pi\in\Pi.
\]
\end{theorem}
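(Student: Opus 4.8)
The plan is a multi-slot Lyapunov drift argument carried out on the router queues. Lemma~\ref{lem:detFb} already bounds every tunnel queue deterministically by $\bound$, so the tunnel queues are trivially stable and it suffices to prove that $(Q_i^c(t))$ is stable for every $\boldsymbol\lambda$ in the interior of $\Lambda({\cal V})$ (the boundary then follows by a routine continuity/domination argument). Fix such a $\boldsymbol\lambda$; by the characterization \eqref{eq:region1}--\eqref{eq:region2} there exist flow variables $(f_{ij}^c)\ge 0$ and a slack $\epsilon>0$ with $\lambda_i^c+\sum_a f_{ai}^c+\epsilon\le\sum_b f_{ib}^c$ for all $i,c$ and $\sum_c f_{ij}^c+\epsilon\le\pacap\le\Cin$ for all $(i,j)$. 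Using the quadratic Lyapunov function $L(t)=\sum_{i,c}(Q_i^c(t))^2$, expanding the one-slot drift from \eqref{eq:Qevol}, bounding the second-order terms by a constant (all per-slot queue increments are bounded and router degrees are finite), and re-indexing the telescoping term over tunnels gives
\[
\mean{L(t+1)-L(t)\,|\,\text{history}}\;\le\;B+2\sum_{i,c}Q_i^c(t)\lambda_i^c+2\sum_{(i,j),c}\mean{Q_j^c(t)\,\phi_{ij}^c(t)-Q_i^c(t)\,\mu_{ij}^c(t,\TB)}.
\]

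The essential difficulty --- and the step I expect to be the main obstacle --- is that, unlike ordinary backpressure over physical links, a tunnel is a work-conserving tandem with uncontrollable, delayed output, so the packets $\mu_{ij}^c(t,\TB)$ that $\TB$ injects and the packets $\phi_{ij}^c(t)$ that emerge in the same slot are unrelated term by term. I would reconcile them by passing to a $T'$-slot drift $\mean{L(t+T')-L(t)\mid\cdot}$ and using two facts. First, Lemma~\ref{lem:detFb} together with work-conservation and Lemma~\ref{lem:leaky} yield a uniform constant $D$ such that every injected packet exits its tunnel within $D$ slots and at most $\bound$ packets ever straddle the endpoints of a window; matching each packet counted by $\phi_{ij}^c$ with the earlier slot in which $\mu_{ij}^c$ counted it, and then replacing each weight $Q^c_\cdot(\tau)$ by $Q^c_\cdot(t)$, turns the cross term into $-2\sum_{(i,j),c}(Q_i^c(t)-Q_j^c(t))\,\mu_{ij}^c(t,\TB)$ up to an $O(T'^2)$ error from the weight replacement and an $O(\bound\sum_{i,c}Q_i^c(t))$ error (plus a constant) from the in-flight packets. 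Second, the threshold makes $\TB$ operate each tunnel in fill/drain cycles: it injects $\Cin$ exactly when \eqref{eq:C1} holds and the differential backlog is positive, and when \eqref{eq:C1} fails --- in which case \eqref{eq:thres} forces $F_{ij}(t)>T_0$ and $F_{ij}(t-1)>T_0$, so Lemma~\ref{lem:leaky} applies at $t$ and $t-1$ --- the tunnel drains at exactly its bottleneck rate $\pacap$; over a window long compared with the cycle length $\lceil\Cin/\pacap\rceil$ and with $D$, this means $\TB$ drives each tunnel at average rate $\pacap>\sum_c f_{ij}^c$ while the backlogs change by only $O(1)$ within a cycle.

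What remains is the ordinary backpressure calculation. Per cycle, $\TB$'s accrued value $\Cin\max_c(Q_i^c-Q_j^c)^+$ at its single injection slot dominates the fluid's demand $\sum_c(Q_i^c-Q_j^c)f_{ij}^c$ summed over the whole cycle, since $\sum_c f_{ij}^c\le\pacap$ and the cycle has length $\le\lceil\Cin/\pacap\rceil$, up to an $O(1)$ weight-fluctuation term per cycle, i.e.\ $O(T')$ over the window (and when all differentials are non-positive both sides are non-positive). Summing over tunnels and using the flow-conservation slack in the standard telescoping way collapses the bracket to $-\epsilon T'\sum_{i,c}Q_i^c(t)$, so that altogether
\[
\mean{L(t+T')-L(t)\mid\cdot}\;\le\;C-\bigl(2\epsilon T'-c_0\bound\bigr)\sum_{i,c}Q_i^c(t),
\]
with $C,c_0$ not depending on $T'$. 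Choosing $T'$ large enough that $2\epsilon T'>c_0\bound$ makes the $T'$-step drift negative outside a bounded set, and the multi-slot Foster--Lyapunov criterion of \cite{georgiadis} gives $\limsup_{T\to\infty}\tfrac1T\sum_{t=0}^{T-1}\mean{\sum_{i,c}Q_i^c(t)}<\infty$. With Lemma~\ref{lem:detFb} this shows $\TB$ stabilizes the whole interior of $\Lambda({\cal V})$; since $\Lambda^\pi({\cal V})\subseteq\Lambda({\cal V})$ for every $\pi\in\Pi$, the claimed inclusion $\Lambda^{\TB}({\cal V})\supseteq\Lambda^\pi({\cal V})$ follows. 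The genuinely delicate step is the reconciliation in the second paragraph: bounding the errors from the delayed, uncontrollable tunnel output and from the slots in which $\TB$ deliberately withholds injection so that their backlog-linear part carries only a coefficient $O(\bound)$, absorbable once $T'$ is taken large --- this is exactly where Lemma~\ref{lem:leaky}, the deterministic bound of Lemma~\ref{lem:detFb}, and the precise threshold \eqref{eq:thres} are all needed.
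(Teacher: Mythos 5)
Your overall framework is the same as the paper's: a $K$-slot quadratic Lyapunov drift on the router queues only (the tunnel queues being deterministically bounded by Lemma~\ref{lem:detFb}), replacement of the uncontrollable outputs $\tilde\phi_{ai}^c$ by the controllable inputs $\tilde\mu_{ai}^c$ at a cost of $d_{\max}\bound$ per node (which is exactly how the paper handles the delayed-output issue --- it never needs your stronger ``every packet exits within $D$ slots'' claim), comparison against the $\epsilon$-slack flow decomposition, and a window length chosen large enough that $K\epsilon$ dominates the $O(\bound)$ backlog-linear error. All of that matches.

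The genuine gap is in the step you yourself flag as delicate: the comparison of $\TB$'s weighted service against the fluid over the window. Your per-cycle accounting does not close. In a fill/drain cycle of length $\lceil\Cin/\pacap\rceil$ with a single injection of $\Cin$ packets, $\TB$ accrues $\Cin\max_c(Q_i^c-Q_j^c)^+$, while the fluid accrues up to $\lceil\Cin/\pacap\rceil\,(\sum_c f_{ij}^c)\max_c(Q_i^c-Q_j^c)^+$, which can be as large as $(\Cin+\pacap)\max_c(Q_i^c-Q_j^c)^+$ minus only an $\epsilon$-proportional slack. The deficit of order $\pacap\max_c(Q_i^c-Q_j^c)^+$ per cycle accumulates to a term that is linear in both $K$ and the backlogs, so it cannot be absorbed by taking $K$ large, and absorbing it into the slack would require $\epsilon$ bounded away from zero, which is not available for arbitrary interior arrival rates. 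The paper avoids this by (i) interposing an explicit stationary randomized oracle policy $\OR$ that injects according to the flow decomposition but, crucially, also obeys the threshold rule, so that Lemma~\ref{cor:stat} (via the absorption property of Lemma~\ref{lem:absorption}: under $\OR$ an underloaded tunnel never becomes loaded) shows the threshold costs $\OR$ at most $\bound$ packets of service over the whole window, not per cycle; and (ii) comparing $\TB$ to $\OR$ over \emph{overload subperiods} ${\cal T}_m$ (one leading underload slot plus the consecutive overload slots that follow), where the inequality $F_{ij}(\text{end})>F_{ij}(\text{start})$ gives $\sum_{{\cal T}_m}\mu(\TB)\ge\sum_{{\cal T}_m}\phi=|{\cal T}_m|\pacap\ge\sum_{{\cal T}_m}\mu(\OR)$ with \emph{no} per-cycle additive loss --- Lemma~\ref{lem:leaky} applies to every slot of the subperiod, including the leading one, precisely because $T=T_0+\max\Cin$. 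Two further cases that your sketch does not treat and the paper must: the window may begin in the middle of an overload period (handled by showing $\TB$ and $\OR$ make identical zero decisions there, so the subperiod bound \eqref{eq:bound2} is not needed), and the sign of $Q_i^c(t)-Q_j^c(t)$ forces two different decompositions of the window (subperiods ${\cal T}$ versus raw overload slots ${\cal O}$), since multiplying the aggregate service inequality by the initial differential backlog only preserves direction when that differential is nonnegative. So the architecture of your proof is right, but the central $K$-slot comparison --- the content of Lemma~\ref{lem:kslot} --- is asserted rather than proved, and the cycle-based substitute offered for it is arithmetically insufficient.
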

\begin{proof}{The proof is 
is based on a novel $K$-slot Lyapunov drift analysis and 
it  is given in the Appendix~\ref{app:th}.}\end{proof}
$\TB$ is a distributed  policy since it utilizes only local queue information and the capacity of the incident links, while it is agnostic to arrivals, or  capacities of remote links, e.g. note that the decision  does not depend on the capacity of the bottleneck link  $\pacap$. 

{ A very simple distributed protocol can be used to allow overlay nodes to learn the tunnel backlogs. 
Specifically  $F_{ij}(t)$  can be estimated at node $i$ using an acknowledgement scheme, whereby $j$ periodically informs $i$ of how many packets have been received so far. 
 In practice, the router nodes obtain a delayed estimate  $\tilde{F}_{ij}(t)$. However, using the concepts in \cite{B_neely_10}-p.85, it is possible to  show that  such estimates do not hurt the efficiency of the scheme.}

\section{Simulation Study}

In this section we perform extensive simulations  to:
\begin{itemize}
\item[(i)] showcase the maximal stability of $\TB$ and compare its throughput performance to other routing policies,
\item[(ii)] examine the impact of  different  forwarding scheduling policies (FIFO, HLPSS, Strict Priority, LQF) on throughput and delay of $\TB$,
\item[(iii)] demonstrate that $\TB$ has good delay performance, and
\item[(iv)] study the extension of $\TB$ to the case of overlapping tunnels.
\end{itemize}

First we present  dynamic routing  policies from the literature against which we will compare $\TB$.

\textbf{Backpressure in the overlay ($\BPr$):} For every tunnel  $(i,j)\in{\cal E}$ define 
\[
c_{ij}^*\in \argmax_{c\in{\cal C}} Q_i^c(t)-Q_j^c(t),
\]
ties solved arbitrarily. Then  choose   $\mu_{ij}^{c}(t,\BPr)=0, c\neq c_{ij}^*$ and
\begin{equation*}\label{eq:servBP}
\mu_{ij}^{c_{ij}^*}(t,\BPr)=\left\{\begin{array}{ll}
\Cin & \text{ if  } Q_i^{c_{ij}^*}(t)>Q_j^{c_{ij}^*}(t) \\
0 & \text{ otherwise.} 
\end{array}\right.
\end{equation*}
 This corresponds to backpressure applied only to routers  ${\cal V}$, which is  admissible in our system, $\BPr\in\Pi$.

\textbf{Backpressure in the physical network ($\BPa$):} For every physical link  $(m,n)\in{\cal L}$ define  
\[
c_{mn}^*\in \argmax_{c\in{\cal C}} Q_m^c(t)-Q_n^c(t)
\]
ties solved arbitrarily. Then  choose   $\mu_{mn}^{c}(t,\BPa)=0, c\neq c_{mn}^*$ and
\begin{equation}\label{eq:servBP}
\mu_{mn}^{c_{mn}^*}(t,\BPa)=\left\{\begin{array}{ll}
R_{mn} & \text{ if  } Q_m^{c_{mn}^*}(t)>Q_n^{c_{mn}^*}(t) \\
0 & \text{ otherwise} 
\end{array}\right.
\end{equation}
This is the classical backpressure from \cite{TE92}, applied to all nodes ${\cal N}$ in the network, and thus it is not admissible in the overlay, $\BPa\notin\Pi$, whenever ${\cal V\subset N}$. 
Since this policy achieves the full throughput $\Lambda({\cal N})$, we use it as a throughput benchmark.

\textbf{{Backpressure Enhanced with Shortest Paths Bias ($\BPs$):}} 
For every node-session pair $(m,c)$ define the hop count from $m$ to the destination of $c$ as $h_n^c$. For every physical link  $(m,n)\in{\cal L}$ define
\[
c_{mn}^*\in \argmax_{c\in{\cal C}} Q_m^c(t)-Q_n^c(t)+h_m^c-h_n^c.
\]
ties solved arbitrarily. Then choose $\mu_{mn}^c(t,\BPs)$ according to (\ref{eq:servBP}). This policy was proposed by \cite{NMR05} to reduce delays.
When the congestion is small, the shortest path bias introduced by the hop count difference leads the packets directly to the destination without going through  cycles or longer paths.
Such a policy requires control at every node, and thus it is not admissible in the overlay, $\BPs\notin\Pi$, whenever ${\cal V\subset N}$. Since, however, it is known to achieve $\Lambda({\cal N})$ and to outperform $\BPa$ in terms of delay, it is useful for throughput and  delay comparisons.


%
%

\subsection{Showcasing Maximal Stability}

Consider the network of  Figure \ref{fig:maxsta} (left), and define two sessions sourced at $\textsf{a}$; session 1  destined to $\textsf{e}$ and session 2 to $\textsf{c}$. 
We assume that $R_{\textsf{ab}}=2$ and all the other link capacities are unit as shown in the Figure.
 We choose $R_{\textsf{ab}}$ in this way to make the routing decisions of session 1 more difficult.
  We show the full throughput region $\Lambda({\cal N})$ achieved by $\BPa,\BPs$ which however are not admissible in the overlay. 
Then we experiment with $\TB,\BPr$ and we also show the throughput of plain Shortest Path routing.
For $\TB$, according to example settings and (\ref{eq:thres}) it is $T_0=2$; we choose $T=6$. 

Since the example satisfies the non-overlapping tunnel condition, by Theorem \ref{th:opti} our policy achieves $\Lambda({\cal V})$.
This is verified in the simulations, see Figure~\ref{fig:maxsta} (right). 
From the figure we can conclude that for this example we have $\Lambda({\cal V})=\Lambda({\cal N})$, although ${\cal V\subset N}$.
This is consistent to the findings of \cite{C_jones_14}.
From the same Figure we see that both backpressure in the overlay $\BPr$ and Shortest Path achieve only a fraction of $\Lambda({\cal V})$, and hence they are not maximally stable. 
For $\BPr$, we have loss of throughput when both sessions compete for traffic, in which case $\BPr$ fails to consider congestion information from the tunnel $\textsf{ac}$ and therefore  allocates this tunnel's resources wrongly to the two sessions.
For Shortest Path, it is clear that each session uses only its own dedicated shortest path and hence the loss of throughput is due to no path diversity.

\begin{figure}[t!]
\begin{center}
\begin{overpic}[scale=.4]{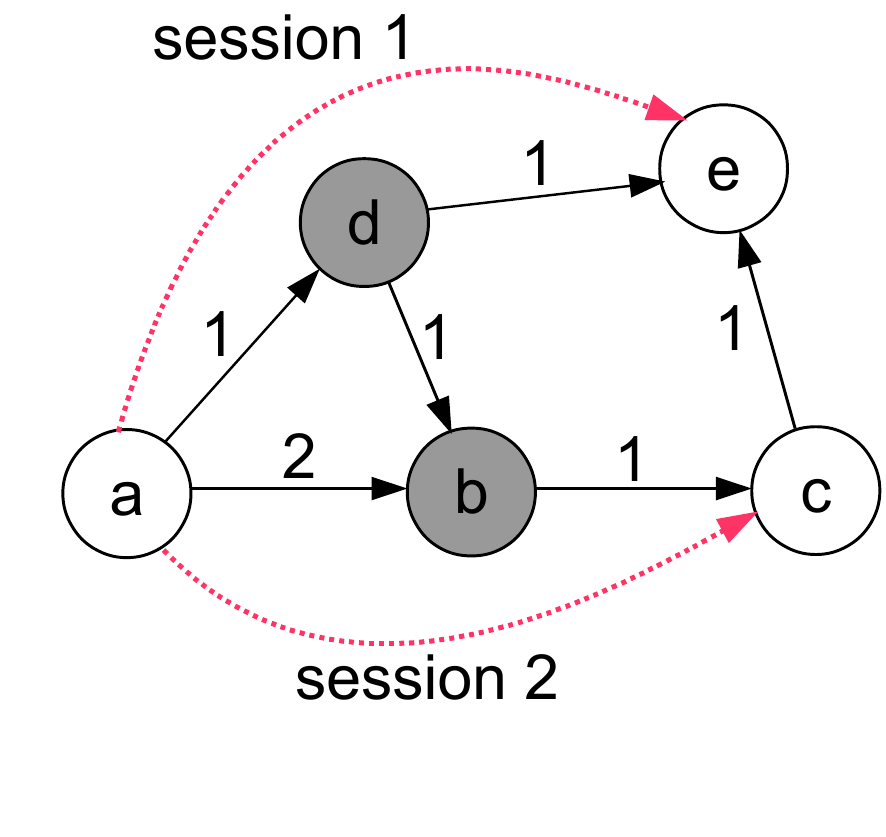}
\end{overpic}~~~
\begin{overpic}[scale=.47]{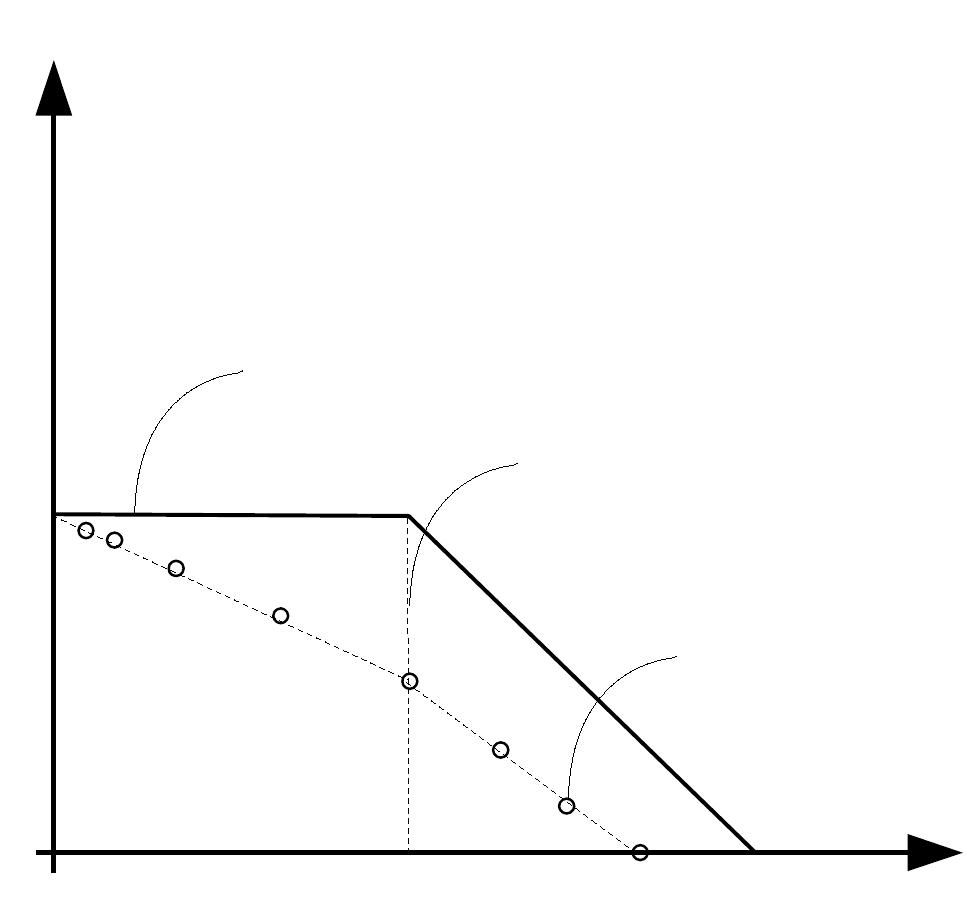}
\put(87,0){\small$\lambda_1$}
\put(-4,80){\small$\lambda_2$}
\put(27,55){\footnotesize$\TB$,$\overbrace{\BPa,\BPs}^{\text{not admissible in $\Pi$}}$}
\put(71,26){\footnotesize $\BPr$}
\put(54,45.3){\footnotesize Shortest Path}
\put(70,0){\footnotesize $(2,0)$}
\put(37,0){\footnotesize $(1,0)$}
\put(0,0){\footnotesize $\mathbf{0}$}
\put(-10,40){\footnotesize $(0,1)$}
\end{overpic}
   \caption{Throughput comparison: (left) Example under study. (right) Throughput achieved by $\{\TB, \BPr, \text{Shortest Path}\}\subset \Pi$ and $ \BPa, \BPs\notin \Pi$.}
     \label{fig:maxsta}
\end{center}
\end{figure}
 

To understand why $\TB$ works,
we  examine a sample path evolution of this system under  $\TB$ for the case where $\lambda_1=\lambda_2=0.97$, which is one of the most challenging scenarios.
For stability, session 1  must use its dedicated path $(\textsf{a,d,e})$, and send almost no traffic through tunnel $\textsf{ac}$.
Focusing on the tunnel $\textsf{ac}$, Figure \ref{fig:samplepath1} shows the differential backlogs per session $Q_{\textsf{a}}^c(t)-Q_{\textsf{c}}^c(t)$ and the corresponding tunnel backlog $F_{\textsf{a}\textsf{c}}(t)$ for a sample path of the system evolution.  
In most time slots $\textsf{a}$  is congested, which is indicated by high differential backlogs. In such slots, the tunnel has more than 1 packet, which guarantees by Lemma \ref{lem:leaky} that it outputs packets at highest possible rate, hence the tunnel is correctly utilized.
Recall that when the tunnel is  full ($F_{\textsf{a}\textsf{c}}(t)>T$=6) no new packets are inserted to the tunnel preventing it from exceeding $F_{\max}$.
Observe that the differential backlog of session 2 always dominates the session 1 counterpart, and hence whenever a tunnel is again ready for a new packet insertion, session 2 will be prioritized for transmission according to (\ref{eq:servf}). Therefore, the proportion of session 2 packets in this tunnel is close to 100\%, which is the correct allocation of the tunnel resources to sessions for this case. 

\begin{figure}[t!]
\begin{center}
\begin{overpic}[scale=.2]{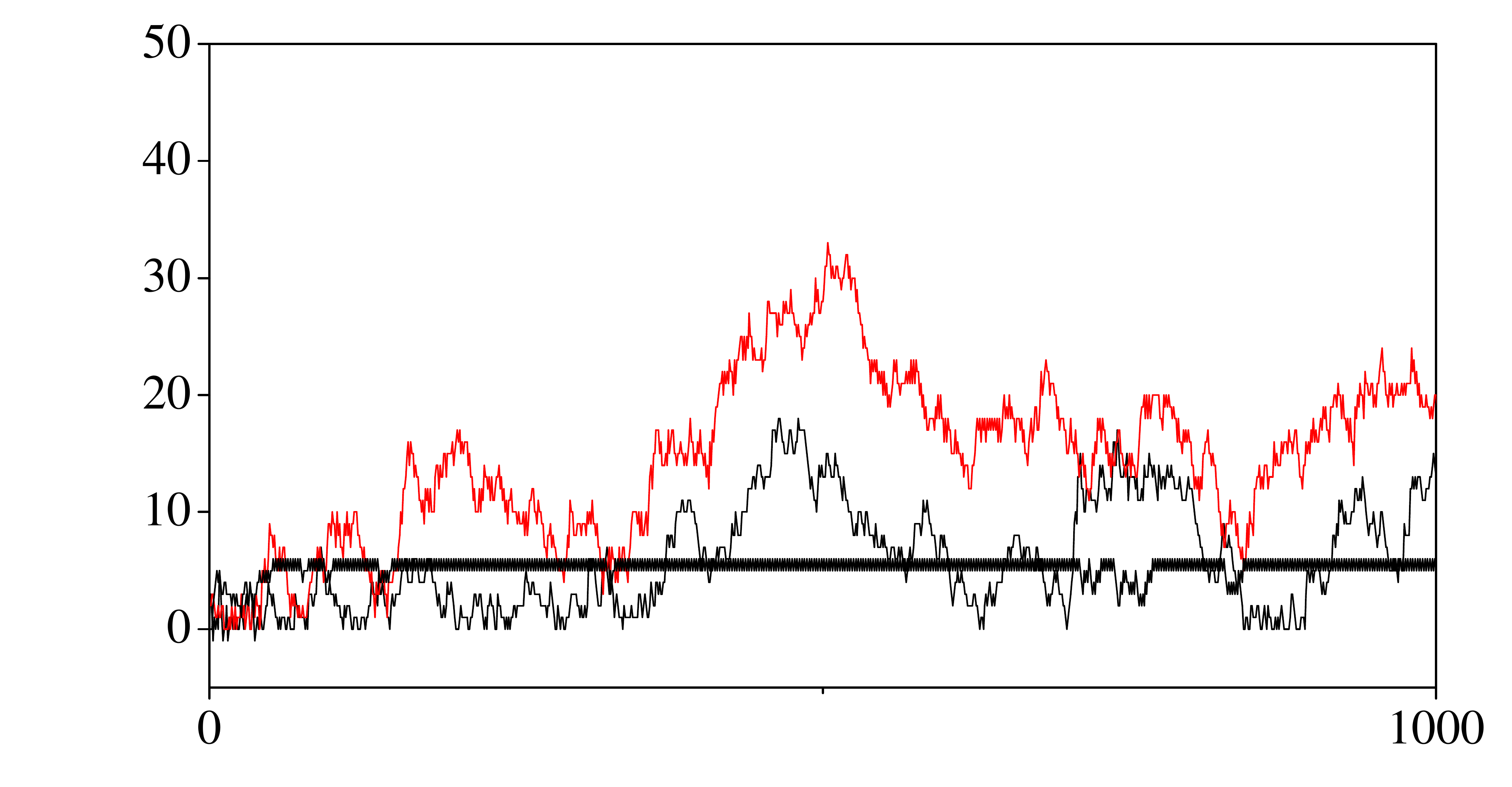}
\put(65,30){\footnotesize $Q_{\textsf{a}}^1(t)-Q_{\textsf{c}}^1(t)$}
\put(57,44){\footnotesize $Q_{\textsf{a}}^2(t)-Q_{\textsf{c}}^2(t)$}
\put(29,42){\footnotesize $F_{\textsf{a}\textsf{c}}(t)$}
\put(40,2){\small time (slots)}
\put(36,50){\small individual backlogs}
\put(31,40){\color{black}\vector(0,-1){24.5}}
\put(59,42){\color{black}\vector(-1,-3){2.5}}
\put(64,30){\color{black}\vector(-1,-2){6}}
\end{overpic}
   \caption{Sample path evolution of the system under $\TB$, $\lambda_1=\lambda_2=.97$.}
\vspace{-0.4in}
     \label{fig:samplepath1}
\end{center}
\end{figure}

\subsection{Insensitivity to Forwarding Scheduling}

At every forwarder node there is a packet scheduling decision to be made,
 to choose how many packets per session should be forwarded in the next slot. 
 Although by assumption we require the forwarding policy to be work-conserving, 
 our results do  not restrict the scheduling  policy any further. 
In particular, our analysis only depends on $\sum_c\phi_{ij}^{c}(t)$ and hence it is insensitive to  the chosen discipline.

Here we simulate the operation of $\TB$ with different forwarding policies, in particular with First-In First-Out (FIFO), Head of Line Proportional Processor Sharing (HLPPS), Strict Priority and Longest Queue First (LQF), where HLPPS refers to serving sessions proportionally to their queue backlogs \cite{hlpps}, and LQF refers to giving priority to the session with the longest queue.
Figure \ref{fig:fwddiscipline} shows sample path differences for several  forwarding disciplines on the example of the previous section, while Table~\ref{delay_table} compares the average delay performance for different arrival rates.
Independent of the discipline used, the average total number of packets in the system is approximately the same. 
Therefore, while our theorem states that the forwarding policy does not affect  $\TB$ throughput, simulations additionally show that the delay  is also the same.

\begin{figure}[t!]
\begin{center}
\begin{overpic}[scale=.225]{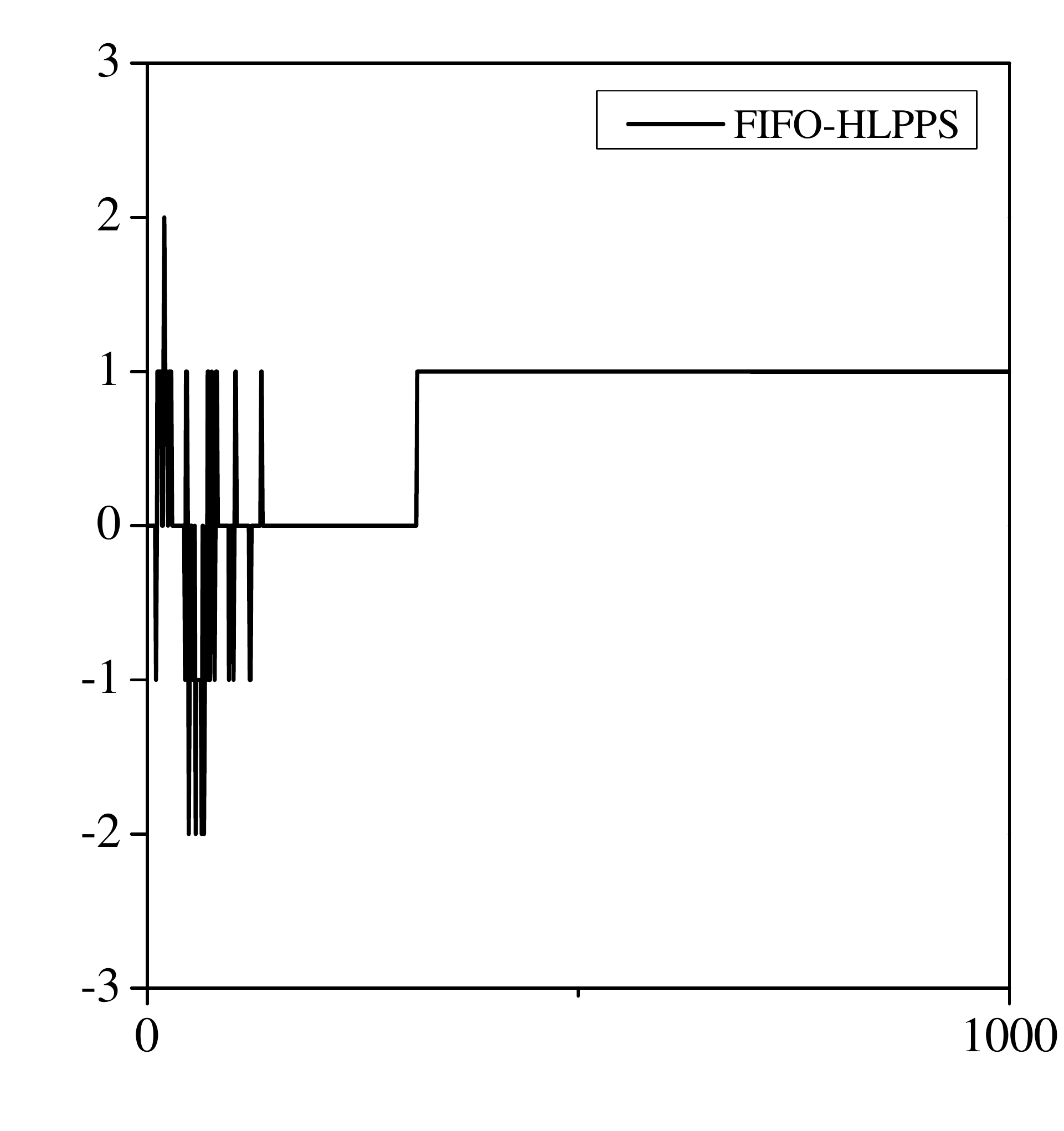}
\put(40,5){\small time (slots)}
\put(19,96){\small total backlog difference}
\end{overpic}
\begin{overpic}[scale=.225]{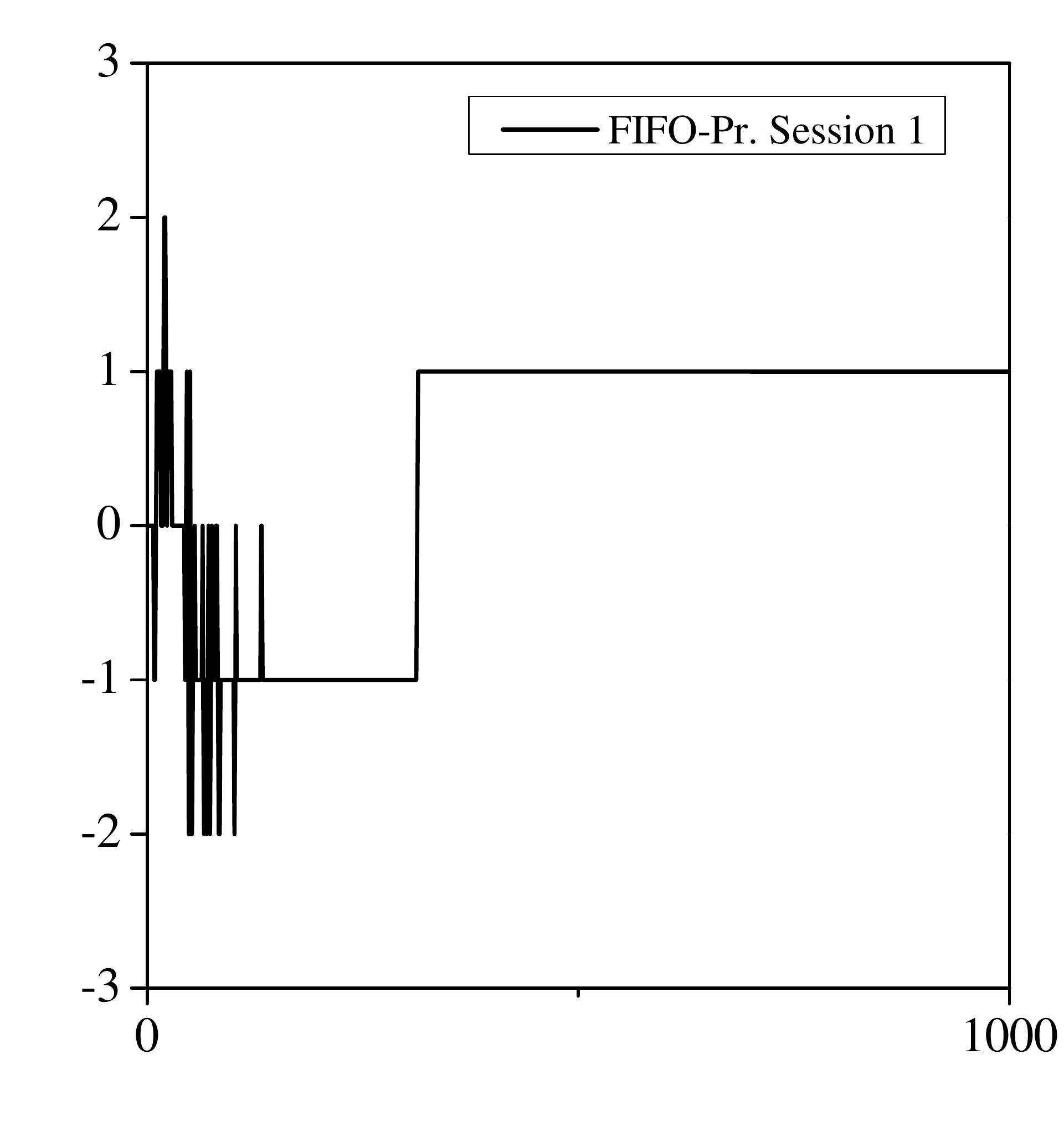}
\put(40,5){\small time (slots)}
\put(19,96){\small total backlog difference}
\end{overpic}
   \caption{Sample path difference in total system backlog, between different  underlay forwarding policies:  (left) difference between FIFO and HLPPS, (right) difference between FIFO and Strict Priority to session 1.}
   \vspace{-0.15in}
     \label{fig:fwddiscipline}
\end{center}
\end{figure}

\begin{table}[t!]
\centering
\begin{tabular}{c | c c c c }
$\lambda$ & FIFO & HLPPS & LQF & Priority Session 1 \\ [0.5 ex]
\hline
$0.8$ & 7.523 & 7.517 & 7.522 & 7.534\\
$0.85$ & 9.529 & 9.505 & 9.529 & 9.541 \\
$0.9$ & 13.240 & 13.245 & 13.193 & 13.238 \\
$0.95$ & 23.850 & 23.887 & 23.899 & 23.893 \\
$0.99$ & 98.738 & 98.605 & 98.755 & 98.624 \\
\hline
\end{tabular}
\caption{Average delay performance of $\TB$ under different   underlay forwarding policies.}
\vspace{-0.65in}
\label{delay_table}
\end{table}

\subsection{Delay Comparison}
 We simulate the delay of different routing policies, comparing the performance of $\TB$ and  $\BPr$ overlay policies, as well as $\BPa$ and $\BPs$ which are not admissible in the overlay. We  experiment 
  for $\lambda_1=\lambda_2=\lambda/2$, and we plot the average total backlogs in the system for two example networks shown to the left of each plot.
 
 In Fig.~\ref{fig:delay1} $\BPr$ fails to detect congestion in the tunnel \textsf{ac} and consequently delay increases for $\lambda>0.7$. 
 We observe that $\TB$ outperforms $\BPa$ and $\BPr$, and performs similarly to $\BPs$.
 This relates to avoidance of cycles at low loads by use of shortest paths, see
 \cite{georgiadis}. In particular, $\BPs$ achieves this by means of hop count bias, while $\TB$ using the tunnels. 
 \emph{A remarkable fact is that  $\TB$ applies control only at the overlay nodes and outperforms in terms of delay $\BPa$ which controls all physical nodes in the network.} 
 
  In Fig.~\ref{fig:delay2} we study queues in tandem, in which case all policies have maximum throughput since there is a unique path through which all the packets travel. We choose this scenario to  demonstrate  another reason why $\TB$ has good delay performance.
  The delay of backpressure increases quadratically to the number of network nodes because of maintaining equal backlog differences across all neighbors \cite{Bui2009}.
   In the case of $\TB$, as well as any other admissible overlay policy like $\BPr$, the backlogs increase with  the number of routers. Thus, when $|{\cal V}|<|\cal N|$ we obtain a delay gain by applying control only at routers.  Fig.~\ref{fig:delay2}  showcases exactly this delay gain that  $\TB$ and $\BPr$ have versus $\BPa$ and $\BPs$.
 
 We conclude that $\TB$ has very good delay performance which is attributed to two main reasons:
\begin{enumerate}
\item When traffic load is low, the majority of the packets follow  shortest paths. The number of packets going in cycles is significantly reduced.
\item Since there is no need for congestion feedback within the tunnels, the backlog buildup is not proportional to the number of network nodes but  to the number of routers.
\end{enumerate}

\begin{figure}[t!]
\begin{center}
\begin{overpic}[scale=.35]{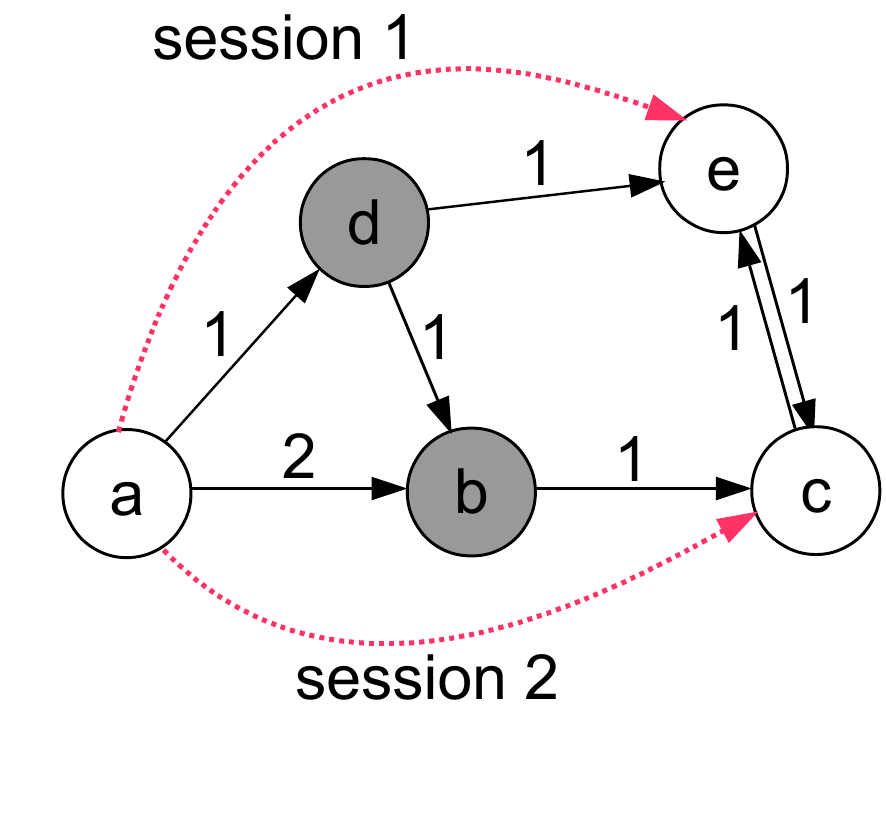}\end{overpic}~
\begin{overpic}[scale=.215]{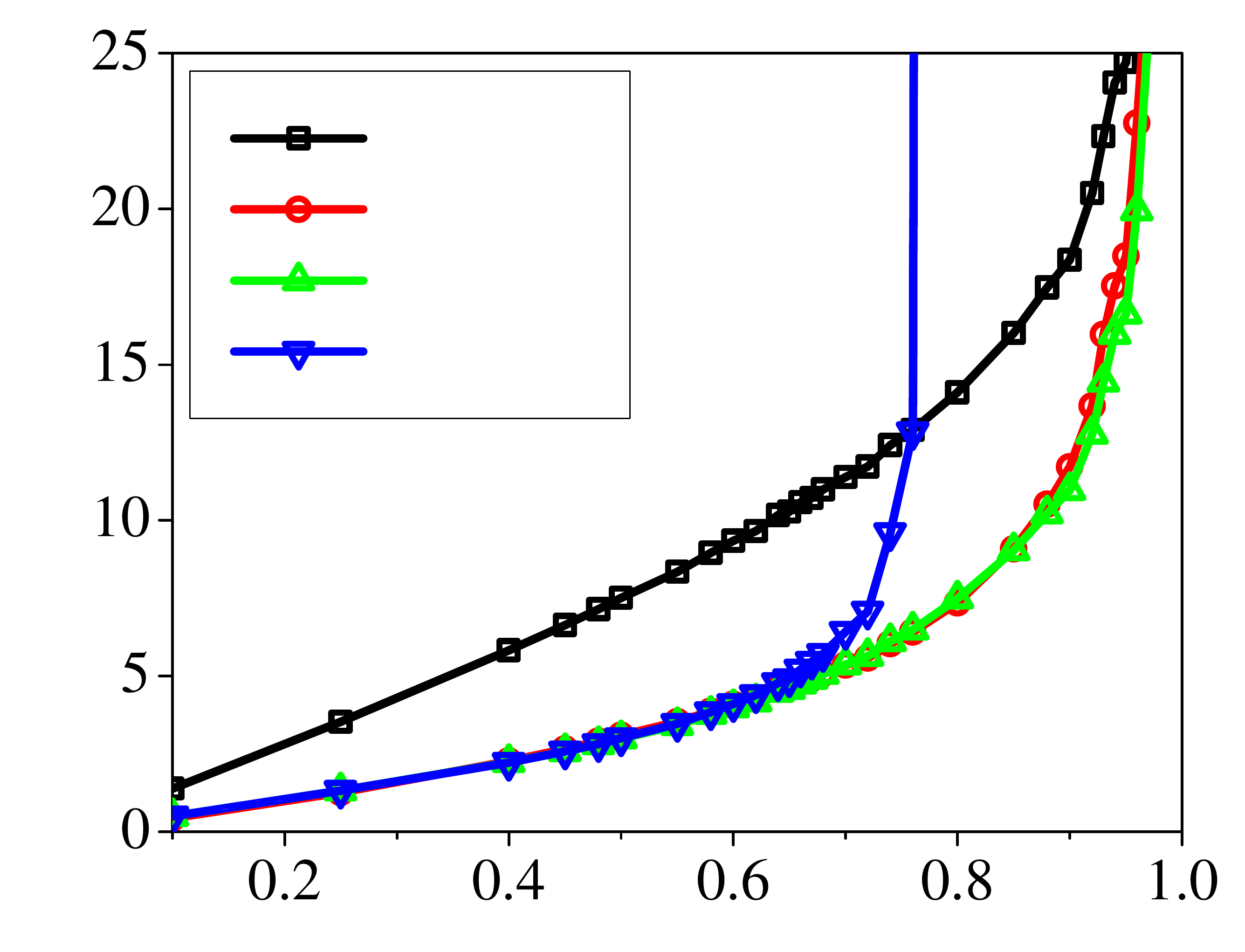}
\put(31,64){\footnotesize $\BPa$}
\put(31,58.5){\footnotesize $\BPs$}
\put(31,52.5){\footnotesize $\TB2$}
\put(31,46.5){\footnotesize $\BPr$}
\put(47,-2){\small load $\lambda$}
\put(30,74){\small average total backlog}
\end{overpic}
   \caption{Delay Comparison: (left) Example under study. (right) Average total  backlog per offered load when $\lambda_1=\lambda_2=\lambda/2$.}
      \vspace{-0.1in}
     \label{fig:delay1}
\end{center}
\end{figure}

\begin{figure}[t!]
\begin{center}
\begin{overpic}[scale=.4]{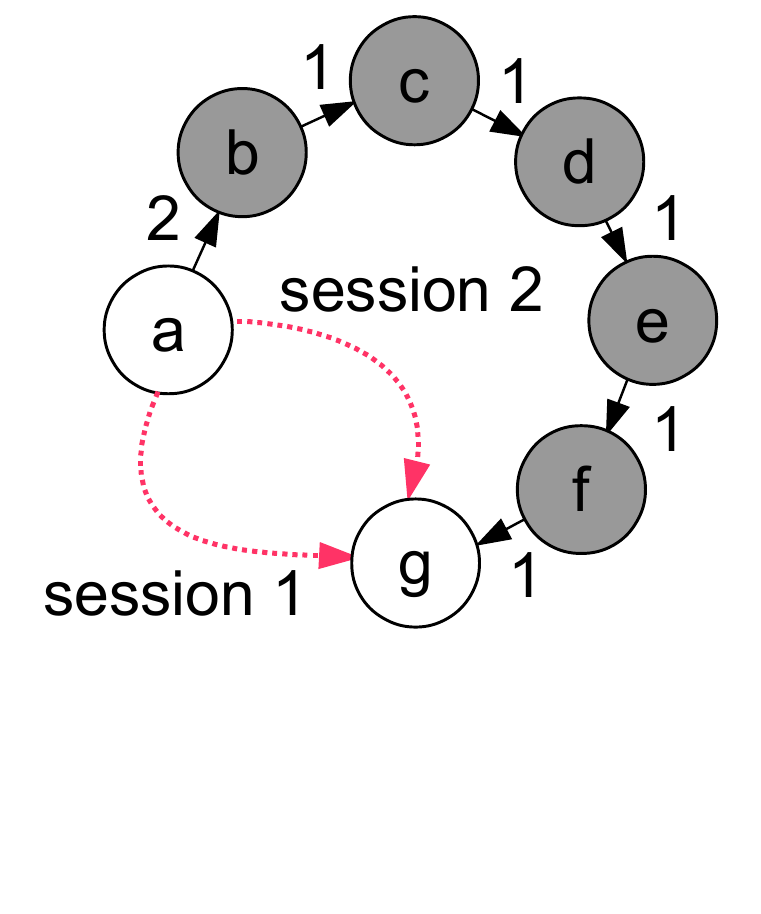}\end{overpic}~~
\begin{overpic}[scale=.215]{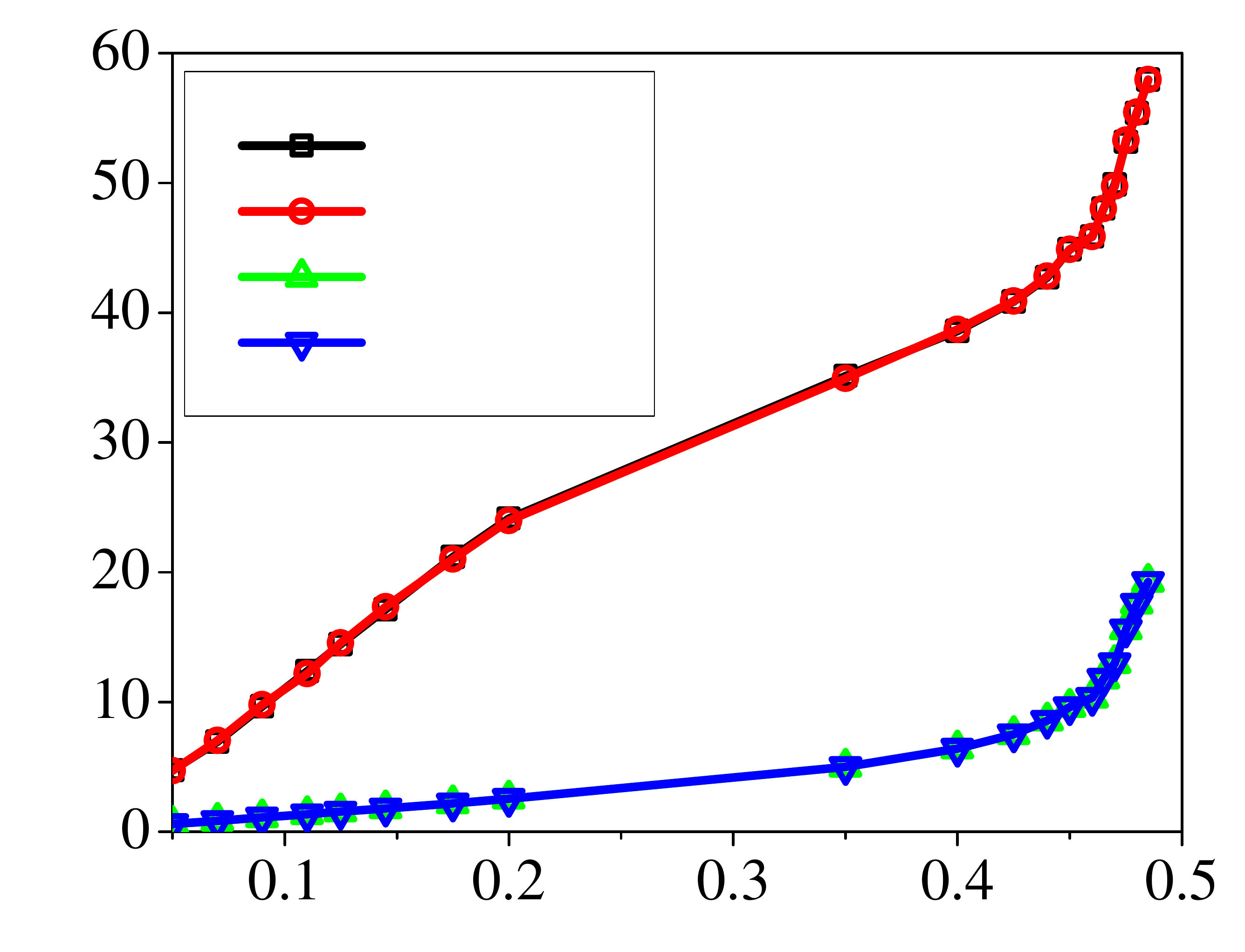}
\put(31,64){\footnotesize $\BPa$}
\put(31,58.5){\footnotesize $\BPs$}
\put(31,52.5){\footnotesize $\TB2$}
\put(31,46.5){\footnotesize $\BPr$}
\put(47,-2){\small load $\lambda$}
\put(30,74){\small average total backlog}
\end{overpic}
   \caption{Delay Comparison: (left) Example under study. (right) Average total  backlog per offered load when $\lambda_1=\lambda_2=\lambda/2$.}
   \vspace{-0.55in}
     \label{fig:delay2}
\end{center}
\end{figure}

\subsection{Applying our Policy to Overlapping Tunnels}

Next we extend $\TB$ to   networks with overlapping tunnels, see the example in  Fig.~\ref{fig:overlapping} (left). 
In this context  Theorem \ref{th:opti} does not apply and we have no guarantees that $\TB$ is maximally stable. 
The key to achieving maximum throughput is to  correctly balance the ratio of  traffic from each session injected into the overlapping tunnels. 
For the network to be stable with load $(.9,.9)$, a policy needs to direct
 most of the traffic of session 1 through the dedicated link $(\textsf{a},\textsf{e})$, or equivalently 
  to allocate $\mu_{\textsf{ac}}^1(t)=0$.
Since node $\textsf{e}$ is the destination of session 1, and hence $Q_{\textsf{e}}^1(t)=0$, 
we need to relate this routing  decision to the congestion in the tunnel. 

To make this work, we introduce the following extension. 
Instead of conditioning transmissions on router differential backlog $Q_i^{c_{ij}^*}(t)>Q_j^{c_{ij}^*}(t)$ as in $\TB$, we use the condition $Q_i^{c_{ij}^*}(t)>Q_j^{c_{ij}^*}(t)+F_{ij}(t)$. 
Intuitively, we expect a non-congested node  to have a small backlog and thus  avoid sending packets over a congested tunnel.
The new policy is called  $\TB$2. 
It can be proven  that $\TB$2 is maximally stable for  non-overlapping tunnels.
Although we do not have a proof for the case of overlapping tunnels, the simulation results show  that by choosing $T$ to be large $\TB$2 achieves maximum throughput.

\noindent \rule[0.05in]{3.5in}{0.01in} 

\vspace{-0.03in}
\begin{centering} \textbf{$\TB$2 for Overlapping Tunnels }

\vspace{-0.03in}
\end{centering}
\noindent \rule[0.05in]{3.5in}{0.01in}

Fix a $T$ to satisfy eq. (\ref{eq:thres}), and recall condition (\ref{eq:C1}):
\begin{equation*}\label{eq:C1b}
F_{ij}(t)<T.
\end{equation*}

In  slot $t$ for tunnel $(i,j)$
 let 
\[
c_{ij}^*\in \argmax_{c\in{\cal C}} Q_i^c(t)-Q_j^c(t),
\]
be a session that maximizes the differential backlog between  router $i,j$, ties resolved arbitrarily. 
 Then route into tunnel $(i,j)$ 
\begin{equation}\label{eq:servf2}
\mu_{ij}^{c_{ij}^*}(t,\text{TB})=\left\{\begin{array}{ll}
\Cin & \text{ if  } Q_i^{c_{ij}^*}(t)>Q_j^{c_{ij}^*}(t)+F_{ij}(t) \\
& \text{ AND } (\ref{eq:C1}) \text{ is true}\\
& \\
0 & \text{ otherwise} 
\end{array}\right.
\end{equation}
and $\mu_{ij}^{c}(t,\text{\TB})=0,~\forall c\neq c_{ij}^*$. Recall, that $\Cin$ denotes the  capacity of physical link  that connects router $i$ to the tunnel $(i,j)$. 

\noindent \rule[0.05in]{3.5in}{0.01in}

Figure \ref{fig:overlapping} shows the results from an experiment where $T=10$
, $\lambda_1=\lambda_2=\lambda$,  and we vary $\lambda$. $\TB2$ achieves full throughput and similar delay to $\BPs$, doing strictly better than $\BPr,\BPa$. To understand how $\TB2$ works, consider the sample path evolution (Fig.~\ref{fig:overlapping2}), where $Q_{\textsf{a}}^1(t)-Q_{\textsf{e}}^1(t),Q_{\textsf{b}}^2(t)-Q_{\textsf{f}}^2(t),F_{\textsf{ae}}(t)$ are shown. 
Most of the time we have $Q_{\textsf{a}}^1(t)-Q_{\textsf{e}}^1(t)<10$, thus
by the choice of $T=10$  and the condition used in (\ref{eq:servf2}),  session 1 rarely gets the opportunity to transmit packets to the overlapping tunnels. 
As $T$ increases session 1 will get  fewer and fewer  opportunities, hence $\TB2$ behavior will approximate the optimal.
In Fig~\ref{fig:overlapping2} (right) we plot the average total backlog for different values of $T$.
As $T$ increases, the performance at high loads improves.

\begin{figure}[t!]
\begin{center}
\begin{overpic}[scale=.39]{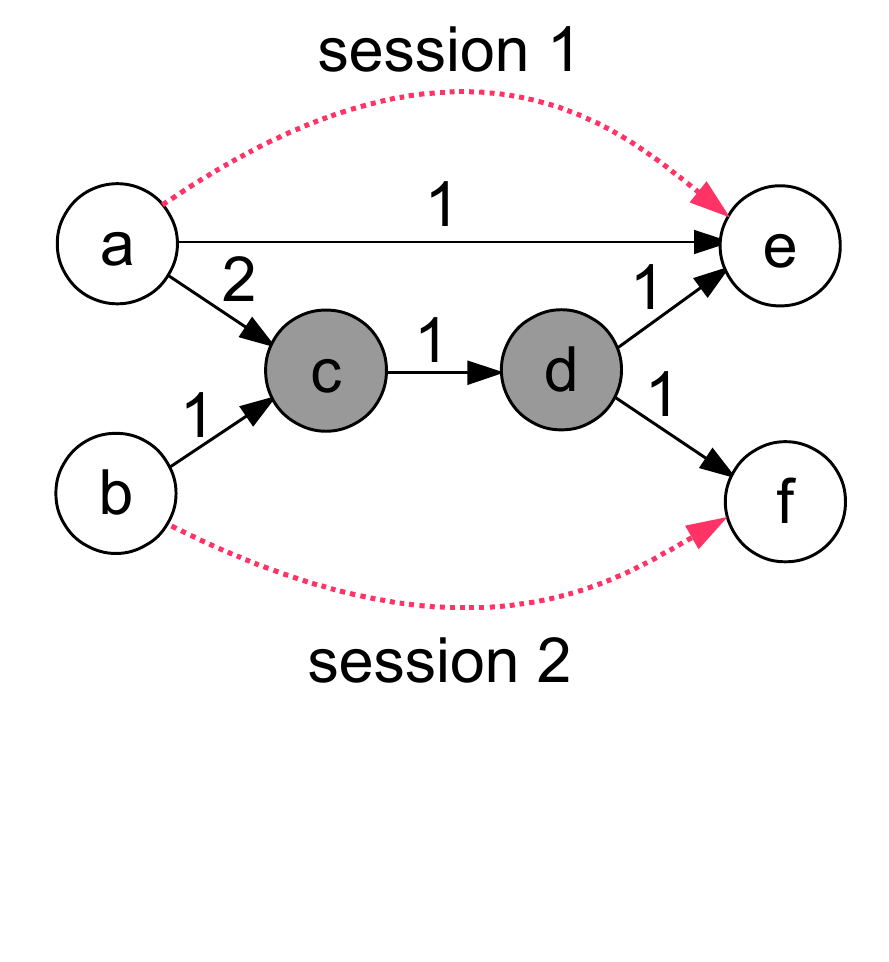}\end{overpic}~\hspace{-0.2in}
\begin{overpic}[scale=.215]{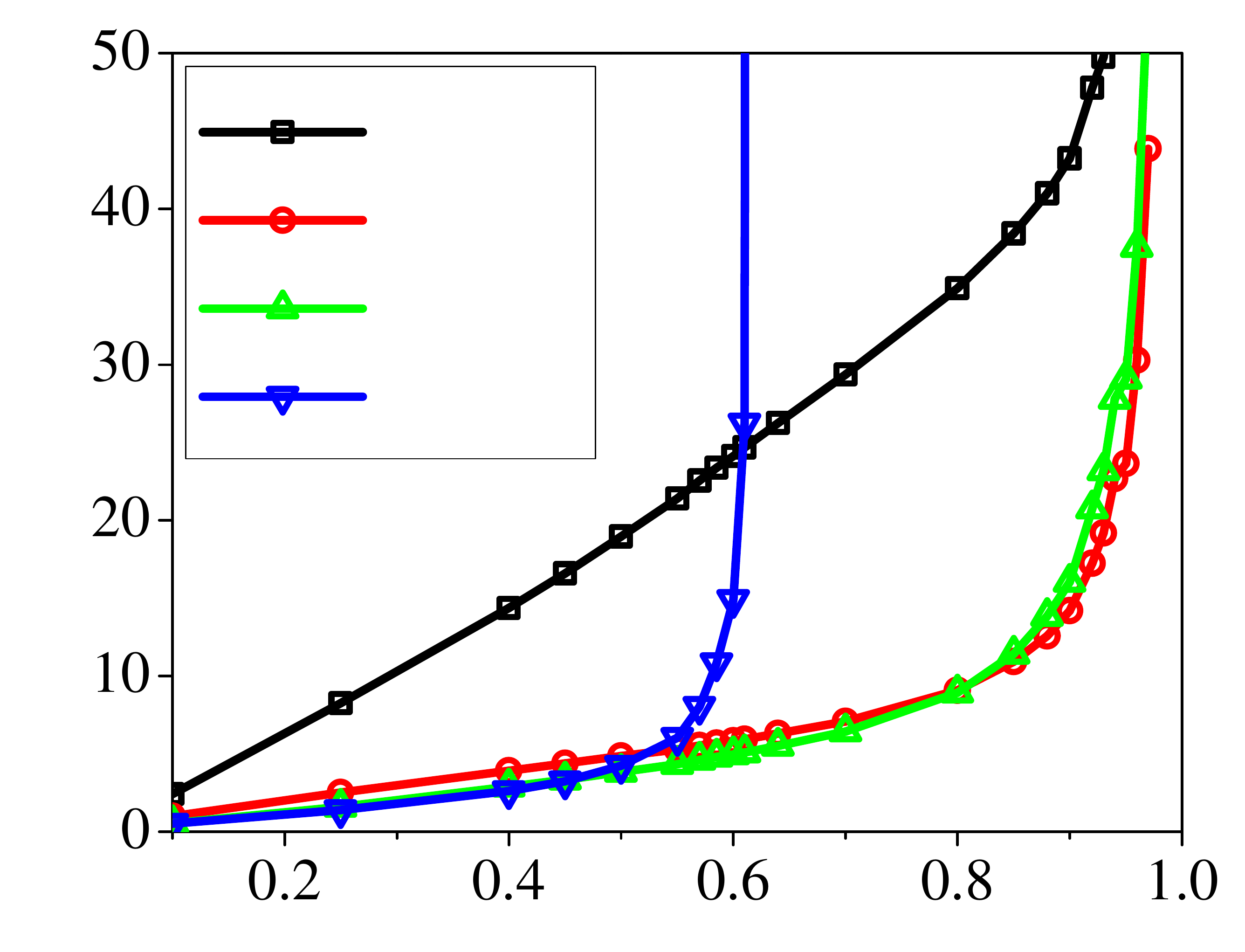}
\put(31,65){\footnotesize $\BPa$}
\put(31,44){\footnotesize $\BPr$}
\put(31,51){\footnotesize $\TB2$}
\put(31,58){\footnotesize $\BPs$}
\put(47,-2){\small load $\lambda$}
\put(30,74){\small average total backlog}
\end{overpic}
   \caption{Overlapping Tunnels: (left) Example under study. (right) Average total  backlog per offered load when $\lambda_1=\lambda_2=\lambda/2$.}
      \vspace{-0.1in}
     \label{fig:overlapping}
\end{center}
\end{figure}

\begin{figure}[t!]
\begin{center}
\hspace{-0.14in}\begin{overpic}[scale=.17]{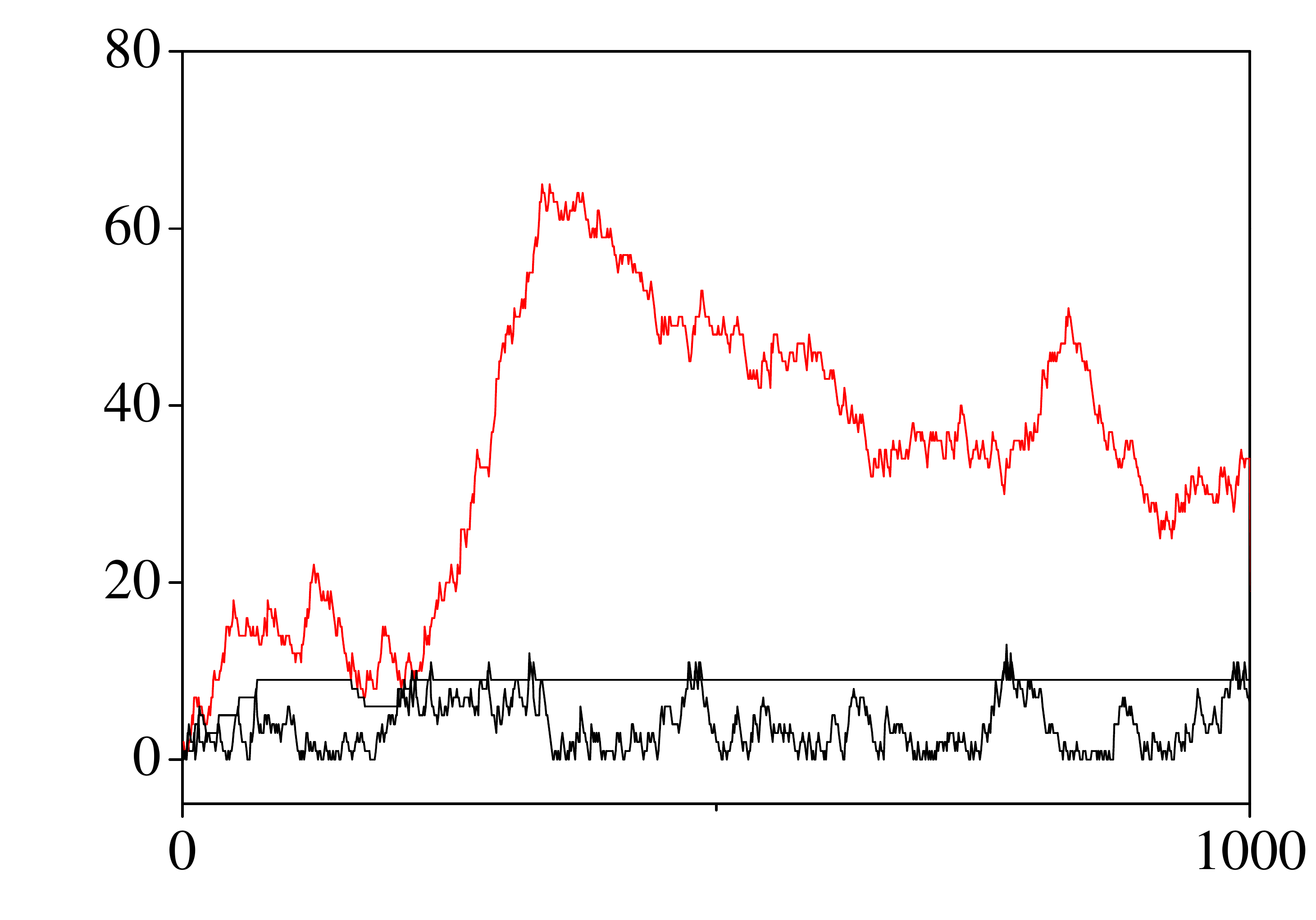}
\put(58,25){\footnotesize $Q_{\textsf{a}}^1(t)-Q_{\textsf{e}}^1(t)$}
\put(57,55){\footnotesize $Q_{\textsf{b}}^2(t)-Q_{\textsf{f}}^2(t)$}
\put(40,32){\footnotesize $F_{\textsf{a}\textsf{e}}(t)$}
\put(40,2){\small time (slots)}
\put(30,67.6){\small individual backlogs}
\put(45,30){\color{black}\vector(0,-1){11.5}}
\put(59,53){\color{black}\vector(-1,-3){2.5}}
\put(62,22){\color{black}\vector(-1,-2){3.5}}
\end{overpic}~\hspace{-0.2in}
\begin{overpic}[scale=.17]{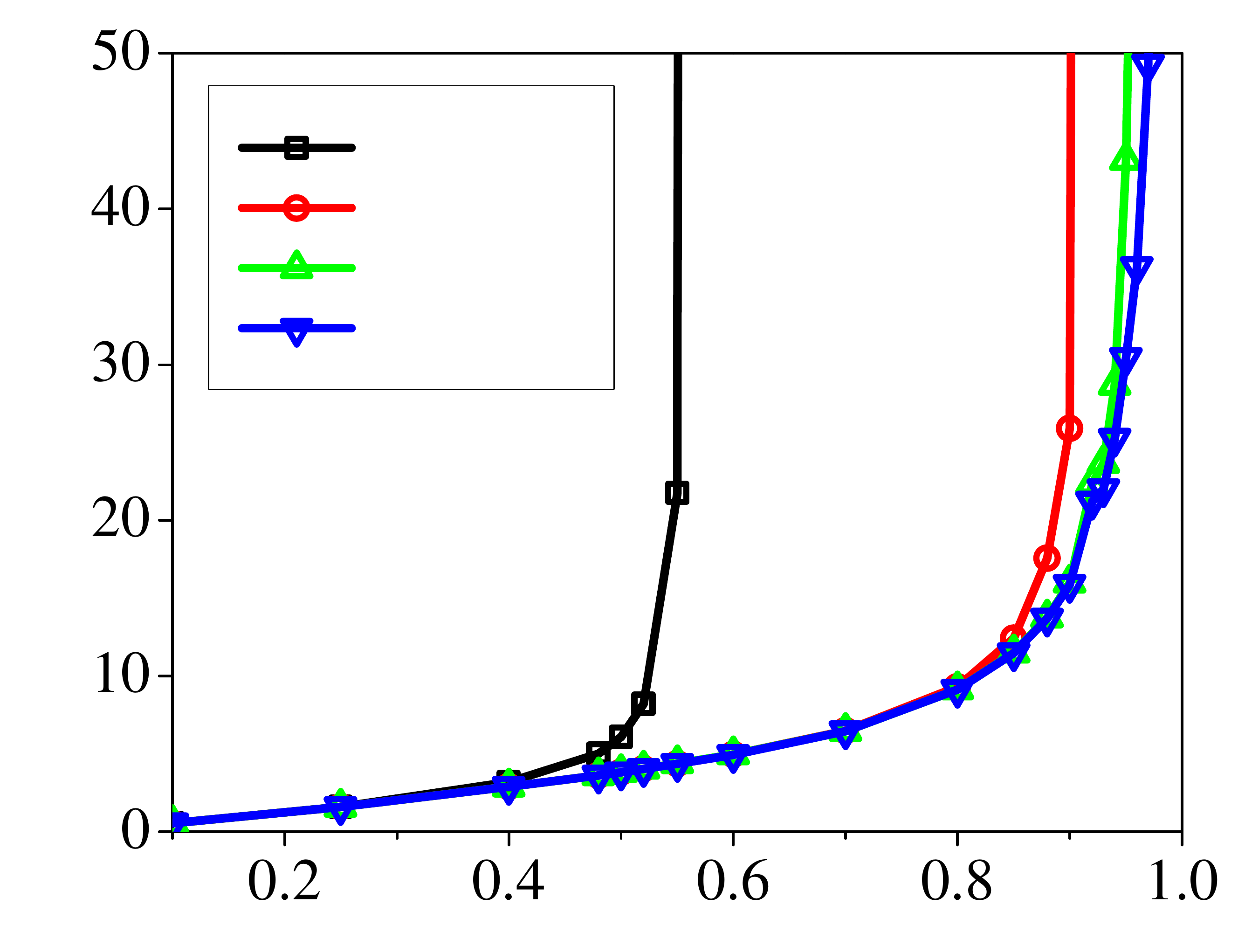}
\put(31,63.5){\footnotesize $T$=2}
\put(31,58){\footnotesize $T$=5}
\put(31,53){\footnotesize $T$=10}
\put(31,48){\footnotesize $T$=25}
\put(47,-2){\small load $\lambda$}
\put(27,74){\small average total backlog}
\end{overpic}
   \caption{(left) System evolution (one sample path) for $\lambda_1=\lambda_2=.97$, $T=10$. (right) Average total  backlog per offered load when $\lambda_1=\lambda_2=\lambda/2$.}
   \vspace{-0.3in}
     \label{fig:overlapping2}
\end{center}
\end{figure}

\section{Conclusions}

In this paper we propose a backpressure extension which can be applied in overlay networks. 
From prior work, we know that if the overlay is designed wisely, it can match the throughput of the physical network \cite{C_jones_14}.
Our contribution is to prove that  the maximum overlay throughput can be achieved by means of dynamic routing.
Moreover, we show that our proposed scheme $\TB$ makes the best of both worlds (a) efficiently choosing the paths in online fashion adapting to network variability and (b) keeping average delay small avoiding the known inefficiencies of the legacy backpressure scheme.

Future work involves the mathematical analysis of the overlapping tunnels case and the consideration of wireless transmissions. In both cases Lemma \ref{lem:leaky} does not hold due to correlation of routing decisions at routers with scheduling at forwarders. 

\section{Acknowledgments}

We would like to thank Dr.~Chih-Ping Li and Mr.~Matthew Johnston for their helpful discussions and comments.


\begin{thebibliography}{10}

\bibitem{Andersen2001}
D.~Andersen, H.~Balakrishnan, F.~Kaashoek, and R.~Morris.
\newblock Resilient overlay networks.
\newblock In {\em Proc. ACM SOSP}, Oct. 2001.

\bibitem{hlpps}
Maury Bramson.
\newblock Convergence to equilibria for fluid models of head-of-the-line
  proportional processor sharing queueing networks.
\newblock {\em Queueing Systems}, 23(1-4):1--26, 1996.

\bibitem{Bui2009}
L.~Bui, R.~Srikant, and A.~Stolyar.
\newblock Novel architectures and algorithms for delay reduction in
  back-pressure scheduling and routing.
\newblock In {\em Proc. IEEE INFOCOM}, April 2009.

\bibitem{FF62}
L.R. Ford and D.R. Fulkerson.
\newblock Flows in networks.
\newblock In {\em Princeton universtiy Press}, 1962.

\bibitem{georgiadis}
L.~Georgiadis, M.~Neely, and L.~Tassiulas.
\newblock Resource allocation and cross-layer control in wireless networks.
\newblock {\em Foundations and Trends in Networking}, 1:1--147, 2006.

\bibitem{C_jones_14}
N.~M. Jones, G.~S. Paschos, B.~Shrader, and E.~Modiano.
\newblock An overlay architecture for throughput optimal multipath routing.
\newblock In {\em Proc.~of ACM Mobihoc}, 2014.

\bibitem{B_neely_10}
M.~J. Neely.
\newblock {\em Stochastic Network Optimization with Application to
  Communication and Queueing Systems}.
\newblock Morgan \& Claypool, 2010.

\bibitem{NMR05}
Michael~J. Neely, Eytan Modiano, and Charles~E. Rohrs.
\newblock Dynamic power allocation and routing for time-varying wireless
  networks.
\newblock {\em IEEE Journal on Selected Areas in Communications}, 23:89--103,
  2005.

\bibitem{NewmanBook}
M.~E.~J Newman.
\newblock {\em Networks: An Introduction}.
\newblock Oxford University Press, Inc., New York, NY, USA, 2010.

\bibitem{R_Paschos_14}
G.~S. Paschos and E.~Modiano.
\newblock Dynamic routing in overlay networks.
\newblock Technical report, 2014.

\bibitem{PetersonDavie}
L.~L. Peterson and B.~S. Davie.
\newblock {\em Computer Networks: A Systems Approach}.
\newblock Morgan Kaufmann Publishers Inc., San Francisco, CA, USA, 4th edition,
  2007.

\bibitem{J_Sitaraman_14}
R.~K. Sitaraman, M.~Kasbekar, W.~Lichtenstein, and M.~Jain.
\newblock {\em Overlay Networks: An Akamai Perspective}.
\newblock John Wiley \& Sons, 2014.

\bibitem{TE92}
L.~Tassiulas and A.~Ephremides.
\newblock Stability properties of constrained queueing systems and scheduling
  policies for maximum throughput in multihop radio networks.
\newblock {\em IEEE Transactions on Automatic Control}, 37:1936--1948, 1992.

\end{thebibliography}

\appendices
%
%
%
\section{Proof of Lemma~\ref{lem:leaky}}\label{app:lem}
\newtheorem*{lem:1}{Lemma \ref{lem:leaky}}
\begin{lem:1}[Output of a Loaded Tunnel]
Under any control policy $\pi\in\Pi$,  suppose that in time slot $t$ the total tunnel backlog satisfies $ F_{ij}(t)>T_0$, for some $(i,j)\in{\cal E}$, where $T_0$ is defined in \eqref{eq:T0}. 
The instantaneous output of the tunnel satisfies 
\begin{equation}\label{eq:leaky}
\sum_c\phi_{ij}^c(t)=\pacap.
\end{equation}
\end{lem:1}
\begin{proof}[Proof of Lemma \ref{lem:leaky}]
Consider  a  tunnel $(i,j)$ which forwards packets, using  an arbitrary work-conserving policy, over the path $p_{ij}$ with $M_{ij}$ underlay nodes.
Renumber the nodes in the path in sequence they are visited by packets as $0,1,\dots,M_{ij}+1$, where $0$ refers to $i$ and $M_{ij}+1$ to $j$, hence
\[
p_{ij}\triangleq \{0,1,\dots,M_{ij},M_{ij}+1\}.
\]
Since the statement  is inherently related to packet forwarding internally in the tunnel $(i,j)$, we will introduce some notation.
Denote by $F_{ij}^{k}(t), k=1,\dots, M_{ij}$ the packets waiting at the $k^{\text{th}}$ node  at slot $t$, 
to be transmitted to the $k+1^{\text{th}}$, along tunnel $(i,j)\in {\cal V}$ (the packets may belong to different sessions). 
Clearly, it is $\sum_{k=1}^{M_{ij}}F_{ij}^{k}(t)=F_{ij}(t)$.
Also, let $\phi_{ij}^{k,c}(t)$ be the actual number of session $c$ packets that leave this node in slot $t$. 
For all $(i,j), k, c, t$, due to work-conservation we have
\begin{equation}\label{eq:Fc1}
\sum_c\phi_{ij}^{k,c}(t)=\min\{R_{k}, F_{ij}^{k}(t)\},
\end{equation}
$R_k$ denoting the capacity of the physical link connecting nodes $k,k+1$.
Hence,  $F_{ij}^{k}(t), k=1,\dots ,M_{ij}$ evolve as
\begin{equation}\label{eq:indf}
F_{ij}^{k}(t+1)=F_{ij}^{k}(t)-\sum_c\phi_{ij}^{k,c}(t)+\sum_c\phi_{ij}^{k-1,c}(t).
\end{equation}

First we establish  that the instantaneous output of the tunnel cannot be larger than its bottleneck capacity, i.e.,   \begin{equation}\label{eq:req}\sum_c\phi_{ij}^c(t)\leq\pacap.\end{equation} 
If the bottleneck link is the last link 
on $p_{ij}$ then (\ref{eq:req}) follows immediately from \eqref{eq:Fc1}. Else, pick $k$ such that  $0\leq k<M_{ij}$ and suppose $(k,k+1)$ is the bottleneck link. 
 Then let us focus on the link $(k+1,k+2)$.
 For its input we have 
\[\sum_c\phi_{ij}^{k,c}(t)\stackrel{\eqref{eq:Fc1}}{\leq} R_{k}\triangleq R_{ij}^{\min},\quad \text{for all}~t\]
where above and in the remaining proofs we use parentheses to denote the expressions from which equalities and inequalities follow.
For link $(k+1,k+2)$  output
\[\sum_c\phi_{ij}^{k+1,c}(t)= \min\{F_{ij}^{k+1}(t), R_{k+1}\},\]
where $R_{k+1}\geq R_{k}$. Starting the system empty, the backlog $F_{ij}^{k+1}(t)$ cannot grow larger than $R_{k}$ since this is the maximum number of arriving packets in one slot and they are all served in the next slot. Hence, it is also $\sum_c\phi_{ij}^{k+1,c}(t)=F_{ij}^{k+1}(t)\leq R_{k}$.
By induction, the same is true for  $F_{ij}^{l}(t),\phi_{ij}^{l}(t)$ for any $k<l\leq M_{ij}$, 
and we get
 (\ref{eq:req}).

The remaining proof is by contradiction. Assume $\sum_c\phi_{ij}^c(t)<\pacap$.
Consider the physical link $(k,k+1)$ with $k=2,\dots,M_{ij}$. Using \eqref{eq:indf} 
\begin{equation}\label{eq:leakyback}
F_{ij}^{k}(t)<\pacap ~\Rightarrow~ F_{ij}^{k-1}(t-1)<\pacap.
\end{equation}
To understand (\ref{eq:leakyback}) note that if the RHS was false, by (\ref{eq:Fc1}) we would have $\sum_c\phi_{ij}^{k-1,c}(t-1)\geq \pacap$ and thus by (\ref{eq:indf}) also  $F_{ij}^{k}(t)\geq\pacap$.

\begin{figure}[t!]
\begin{center}
\begin{overpic}[scale=.315]{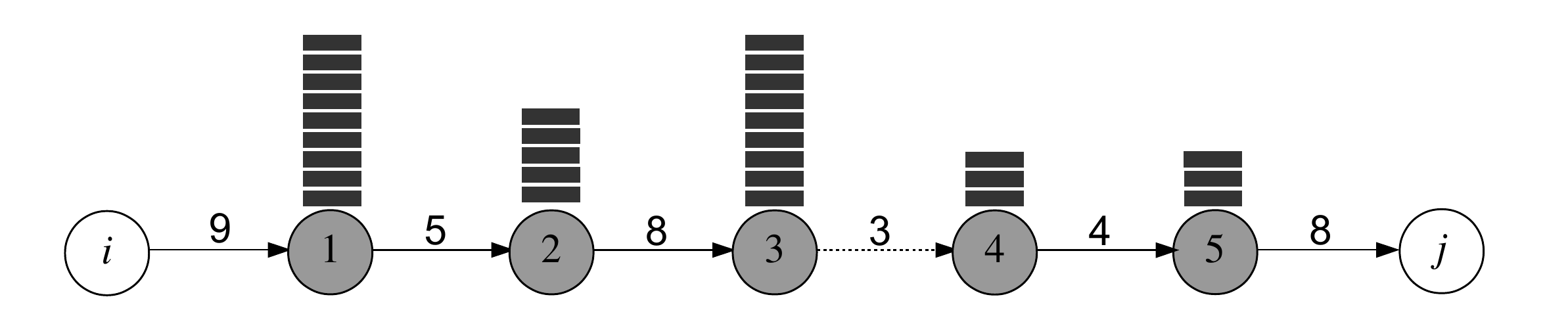}
\end{overpic}
   \caption{An overloaded tunnel with bottleneck capacity $\pacap=3$.}
   \vspace{-0.4in}
     \label{fig:leaky}
\end{center}
\end{figure}

Since by the premise we have $\sum_c\phi_{ij}^{M_{ij},c}(t)\equiv\sum_c\phi_{ij}^c(t)<\pacap$
, applying   (\ref{eq:Fc1}) we deduce $F_{ij}^{M_{ij}}(t)<\pacap$ from which applying (\ref{eq:leakyback}) recursively we roll back in time and space to obtain 
\[F_{ij}^{k}(t-M_{ij}+k)<\pacap,~~k=1,\dots,M_{ij}.\]
Since the maximum backlog increase at any node within one slot is $R_{ij}^{\max}$, we roll forward in time to get
\[F_{ij}^{k}(t)<\pacap+(M_{ij}-k)R_{ij}^{\max},~~k=1,\dots,M_{ij}.\]
Summing up for all forwarders $k=1,\dots,M_{ij}$ we get 
\begin{align}\label{eq:contra}
F_{ij}(t)&=\sum_{k=1}^{M_{ij}}F_{ij}^{k}(t)<\sum_{k=1}^{M_{ij}} \left[\pacap+(M_{ij}-k)R_{ij}^{\max}\right]\notag\\
&=M_{ij}\pacap+\frac{M_{ij}(M_{ij}-1)}2R_{ij}^{\max}\stackrel{\eqref{eq:T0}}{=}T_0.
\end{align}
which contradicts the premise of the lemma.
\end{proof}

\section{Proof of Theorem \ref{th:opti}}\label{app:th}
\begin{proof}[Proof of Theorem~\ref{th:opti}]
In order to prove that $\TB$ is maximally stable, we will pick an arbitrary arrival vector $\boldsymbol \lambda$ in the interior of $\Lambda({\cal V})$ and show that the system is stable.
To prove stability  we perform a $K$-slot drift analysis and show that $\TB$ has a negative drift. 
Our system state is described by the  vector of queue lengths $\mat H_t\triangleq \left( (Q_i^c(t)),(F_{ij}(t)) \right)$. By Lemma \ref{lem:detFb},  the tunnel backlogs $(F_{ij}(t))$ are deterministically bounded under  $\TB$, and thus for the purposes of showing $\TB$ stability we choose  the  candidate quadratic  Lyapunov function:
\begin{equation}\label{eq:lyapunov}
L(\mathbf H_t)\triangleq\frac{1}2 \sum_i \left[Q_i^c(t)\right]^2.
\end{equation}
We will use the following shorthand notation
\[
\mathbbm{E}_\mathbf H\{.\}\equiv \mathbbm{E}\left\{.|\mathbf H_t, F_{ij}(t)\leq F^{\max}, \forall (i,j)\right\}.
\]
The $K$-slot Lyapunov drift under  policy $\pi$ is 
\[
\Delta^{\pi}_K(t)\triangleq  \mathbbm{E}\{L(\mathbf H_{t+K})-L(\mathbf H_t) |\mathbf H_t\}.
\]
From Lemma \ref{lem:detFb} we have  $F_{ij}(t)\leq F^{\max}$ for every sample path,  and thus the $K$-slot Lyapunov drift for TB becomes  $\driftK=\mathbbm{E}_{\mathbf H}\{L(\mathbf H_{t+K})-L(\mathbf H_t)\}$.
To prove the stability of $\TB$, it suffices to show that for any $\boldsymbol\lambda$ in the interior of the stability region there exist positive constants $\eta,\xi$ and a  finite $K$ such that 
$\driftK\leq \eta- \xi\sum_{i,c}Q_i^c(t)$, see $K$-slot drift theorem in \cite{georgiadis} (corollary of the Foster's criterion).
The remaining proof shows this fact.

To derive an expression for the $K$-slot drift $\driftK$ 
we first write the $K$-slot queue evolution inequalities 
\begin{align}
& Q_i^c(t+K)\leq \left(Q_i^c(t)-\sum_{b\in\mathcal V} \tilde\mu_{ib}^{c}(t,\pi)\right)^+\hspace{-0.12in}+\hspace{-0.05in}\sum_{a\in\mathcal V} \tilde\phi_{ai}^{c}(t)+\tilde A_i^c(t),\label{eq:kFeq}\\
& F_{ij}(t+K)\leq F_{ij}(t)-\sum_c\tilde\phi_{ij}^{c}(t)+\sum_c\tilde\mu_{ij}^{c}(t,\pi), \label{eq:kFeq2}
\end{align}
where use the $\tilde{(.)}$ notation to denote summations over $K$ slots: 
\begin{align*}
&\tilde A_i^c(t)\triangleq\sum_{\tau=0}^{K-1} A_i^c(t+\tau),\\
&\tilde\mu_{ij}^{c}(t,\pi) \triangleq \sum_{\tau=0}^{K-1} \mu_{ij}^{c}(t+\tau,\pi),\\ 
&\tilde\phi_{ij}^{c}(t) \triangleq \sum_{\tau=0}^{K-1} \phi_{ij}^{c}(t+\tau).
\end{align*}
The inequality  \eqref{eq:kFeq} is because  the arrivals $\sum_{a\in\mathcal V} \tilde\phi_{ai}^{c}(t)+\tilde A_i^c(t)$ are added at the end of the $K$-slot period---some of these packets may actually be served within the $K$-slot period. 

Taking squares on (\ref{eq:kFeq}), using Lemma 4.3 from \cite{georgiadis}, and performing some calculus we obtain the following bound
\begin{align*}
\driftK \leq  K^2B_1&+\sum_{c,i}K\lambda_i^c Q_i^c(t)\\
&\hspace{-0.5in}-\sum_{c,i} Q_i^c(t) \mathbbm{E}_{\mathbf H}\left\{ \sum_b\tilde\mu_{ib}^{c}(t,\text{\TB})-\sum_a\tilde\phi_{ai}^c(t) \right\}.
\end{align*}
where $B_1\triangleq d_{\max}^2R_{\max}^2+ A_{\max}^2/2+A_{\max}d_{\max}R_{\max}$ is a positive constant related to the maximum number of arriving packets in a slot $A_{\max}$, the maximum link capacity $R_{\max}$, and the maximum node-degree $d_{\max}$ in graph ${\cal G}_R$.

Denote with $X^c_{ij}(t)$ the session $c$ packets in the tunnel $(i,j)$, where $\sum_cX^c_{ij}(t)=F_{ij}(t)$. 

This backlog evolves as
\[
X^c_{ij}(t+K)\leq X^c_{ij}(t)-\tilde\phi_{ij}^c(t)+\tilde\mu_{ij}^c(t).
\]
We have $X^c_{ij}(t+K)\geq 0$, and $X^c_{ij}(t)\leq F_{ij}(t)\leq F^{\max}$, hence
  $X_{ij}^c(t+K)-X^c_{ij}(t)\geq -F^{\max}$. It follows that for any $t,K$   \[\sum_a\tilde\phi_{ai}^c(t)\leq \sum_a\tilde\mu_{ai}^{c}(t,\text{\TB})+d_{\max}F^{\max},\] where  $F^{\max}$ is the deterministic upper bound of $F_{ij}(t)$ from (\ref{eq:Fmax}).
 Hence,

\begin{align}
&\Delta^{\text{\TB}}_K(t)- K^2B_1-\sum_{c,i}(K\lambda_i^c+d_{\max}F^{\max}) Q_i^c(t)\notag\\
&\leq-\sum_{c,i} Q_i^c(t) \mathbbm{E}_{\mathbf H}\left\{ \sum_b\tilde\mu_{ib}^{c}(t,\text{\TB})-\sum_a\tilde\mu_{ai}^{c}(t,\text{\TB}) \right\}\notag\\
&= - \sum_{c,(i,j)} \mathbbm{E}_{\mathbf H}\left\{\tilde\mu_{ij}^{c}(t,\text{\TB})\left[Q_i^c(t)-Q_{j}^c(t)\right] \right\},\label{eq:interim}
\end{align}
where the equality comes from the node-centric and link-centric packet accounting in a network, see \cite{georgiadis} on page 48.

\subsection{An Oracle Policy}
\label{sec:stdpol}

We design a stationary oracle ($\OR$) policy, whose purpose is to assist us in proving the optimality of $\TB$ policy. 
The foundation of $\OR$ lies on the existence of  a flow decomposition.
For any $\boldsymbol\lambda$ in the interior of the stability region, there exists an $\epsilon$ such that $\boldsymbol\lambda^{\epsilon}\equiv\boldsymbol\lambda+\epsilon\mathbf{1}$ is also stabilizable, where $\mathbf{1}$ is a vector of ones. Thus, by the sufficiency of the conditions in section \ref{sec:region} there must exist a feasible flow decomposition $(f_{ij}^{c,\boldsymbol\lambda^{\epsilon}})$ such that 
\[
\sum_{a\in\mathcal V} f_{ai}^{c,\boldsymbol\lambda^{\epsilon}}- \sum_{b\in \mathcal V}f_{ib}^{c,\boldsymbol\lambda^{\epsilon}}\geq \lambda_i^c+\epsilon , \text{ for all } i\in\mathcal{V}
\]
and $\sum_cf_{ij}^{c,\boldsymbol\lambda^{\epsilon}}< \pacap$ for all $(i,j)\in \mathcal{V}$.  Using this particular decomposition we define a specific $\OR$ policy for the particular $\boldsymbol \lambda$ as follows.

\noindent \rule[0.05in]{3.5in}{0.01in} 

\vspace{-0.03in}
\begin{centering} $\boldsymbol \lambda$--\textbf{Stationary Randomized ORacle ($\OR$) Policy }

\vspace{-0.03in}
\end{centering}
\noindent \rule[0.05in]{3.5in}{0.01in}
%

In every time slot and at each tunnel $(i,j)$, 
\begin{itemize}
\item if $F_{ij}(t)\geq T$ (the tunnel is loaded), then choose 
\begin{equation}\label{eq:zeroallocation}
\mu_{ij}^{c}(t,\OR)=0,~~\forall c\in\mathcal{C},
\end{equation}

\item else if $F_{ij}(t)< T$ (not loaded  tunnel),  choose a session using an i.i.d. process $N(t)$ with  distribution 
\[
P\left(N(t)=c'\right)=\frac{f_{ij}^{c',\boldsymbol\lambda^{\epsilon}}}{\sum_cf_{ij}^{c,\boldsymbol\lambda^{\epsilon}}},~~c'=1,\dots,|\mathcal{C}|.
\]
The routing functions are then determined by 
\begin{equation}\label{eq:allocation}
\mu_{ij}^{N(t)}(t,\OR)=\left\{\begin{array}{ll}
\pacap & \text{with prob.}~\frac{\sum_cf_{ij}^{c,\boldsymbol\lambda^{\epsilon}}}{\pacap} \\
0 & \text{with prob.}~1-\frac{\sum_cf_{ij}^{c,\boldsymbol\lambda^{\epsilon}}}{\pacap} 
\end{array}
\right.
\end{equation}
and $\mu_{ij}^{c}(t,\OR)=0,~~\forall c\neq N(t)$.
\footnote{We remark that $N(t)$ and the allocation of service to session $N(t)$ given by (\ref{eq:allocation}) are independent.}\end{itemize}
\noindent \rule[0.05in]{3.5in}{0.01in}

Observe that  $\OR$ satisfies the capacity constraints at every slot, namely $0\leq \sum_c\mu_{ij}^{c}(t,\OR)\leq \Cin$.
Therefore  $\OR\in\Pi$.
Despite wasting transmissions when the tunnels are loaded, $\OR$ stabilizes $\boldsymbol\lambda$:
%
%
\begin{lemma}[$\OR$ $K$-slot performance]\label{cor:stat}
For any $\boldsymbol\lambda$ in the interior of the stability region we have
\begin{align}\label{eq:domin}
&\mathbbm{E}_{\mathbf H}\left\{ \sum_b\tilde\mu^{c}_{ib}(t,\OR)-\sum_a \tilde\mu_{ai}^{c}(t,\OR)\right\}\\
&\hspace{0.5in}\geq K(\lambda_i^c+\epsilon)-d_{\max}\bound , \text{ for all } i\in\mathcal{V}.\notag
\end{align}
\end{lemma}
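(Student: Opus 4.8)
The plan is to exploit the stationarity and memorylessness of the $\OR$ policy to compute the expected single-slot injection into tunnel $(i,j)$, then sum over $K$ slots and over tunnels, and finally invoke the flow-decomposition inequalities. First I would fix $i,c$ and analyze one term $\mathbbm{E}_{\mathbf H}\{\tilde\mu^c_{ib}(t,\OR)\}=\sum_{\tau=0}^{K-1}\mathbbm{E}_{\mathbf H}\{\mu^c_{ib}(t+\tau,\OR)\}$. For a fixed slot $t+\tau$, conditioning on the event $\{F_{ib}(t+\tau)<T\}$, the construction of $\OR$ in \eqref{eq:allocation} together with the independence of $N(t+\tau)$ from the Bernoulli allocation gives $\mathbbm{E}\{\mu^c_{ib}(t+\tau,\OR)\mid F_{ib}(t+\tau)<T\}=\frac{f_{ib}^{c,\boldsymbol\lambda^\epsilon}}{\sum_{c'}f_{ib}^{c',\boldsymbol\lambda^\epsilon}}\cdot\frac{\sum_{c'}f_{ib}^{c',\boldsymbol\lambda^\epsilon}}{\pacap}\cdot\pacap=f_{ib}^{c,\boldsymbol\lambda^\epsilon}$; on the complementary event $\{F_{ib}(t+\tau)\geq T\}$ the injection is zero by \eqref{eq:zeroallocation}. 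Hence $\mathbbm{E}_{\mathbf H}\{\mu^c_{ib}(t+\tau,\OR)\}=f_{ib}^{c,\boldsymbol\lambda^\epsilon}\cdot P_{\mathbf H}(F_{ib}(t+\tau)<T)$, which would equal $K f_{ib}^{c,\boldsymbol\lambda^\epsilon}$ were the tunnel never loaded, but in general is only bounded above by it.

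Next I would handle the deficit caused by loaded tunnels. The key observation is that when $F_{ib}(t+\tau)\geq T>T_0$, Lemma~\ref{lem:leaky} guarantees the tunnel drains at rate $R_{ib}^{\min}$, so it cannot stay loaded for long; more precisely, since under $\OR$ no packets enter a loaded tunnel and the tunnel empties at its bottleneck rate, the number of slots in any $K$-window for which $F_{ib}(t+\tau)\geq T$ (and hence $\mu^c_{ib}=0$ unexpectedly) is controlled. I would combine this with the same $X^c_{ij}$-type accounting used just above \eqref{eq:interim}: writing the $K$-slot balance for the per-session tunnel backlog $X^c_{ib}$ under $\OR$, namely $X^c_{ib}(t+K)\leq X^c_{ib}(t)-\tilde\phi^c_{ib}(t)+\tilde\mu^c_{ib}(t,\OR)$, and using $0\leq X^c_{ib}(\cdot)\leq F^{\max}$, gives $\tilde\mu^c_{ib}(t,\OR)\geq \tilde\phi^c_{ib}(t)-F^{\max}$. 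Summing the incoming version of this over $a\in\text{In}(i)$ and combining with the outgoing flow-conservation inequality from the decomposition is what produces the $-d_{\max}F^{\max}$ slack term.

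The cleanest route is actually to bypass a delicate count of loaded slots: take $\mathbbm{E}_{\mathbf H}\{\sum_b\tilde\mu^c_{ib}(t,\OR)\}$ and bound it below using $\sum_b\tilde\mu^c_{ib}(t,\OR)\geq \sum_b\tilde\phi^c_{ib}(t)$ minus a tunnel-emptying correction — no wait, that is the wrong direction. Instead I would argue: for the \emph{outgoing} tunnels of $i$, use the flow-decomposition lower bound on injection rate, accepting that loaded slots subtract at most $d_{\max}F^{\max}$ total (since each outgoing tunnel can ``swallow'' at most $F^{\max}$ packets' worth of missed injections before it must have drained, by Lemma~\ref{lem:leaky} and the per-session backlog bound); for the \emph{incoming} tunnels of $i$, use $\tilde\mu^c_{ai}(t,\OR)\leq f^{c,\boldsymbol\lambda^\epsilon}_{ai}K$ exactly (no correction needed, since $P_{\mathbf H}(F_{ai}<T)\leq 1$). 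Putting these together,
\begin{align*}
\mathbbm{E}_{\mathbf H}\Big\{\sum_b\tilde\mu^c_{ib}(t,\OR)-\sum_a\tilde\mu^c_{ai}(t,\OR)\Big\}
&\geq K\Big(\sum_b f^{c,\boldsymbol\lambda^\epsilon}_{ib}-\sum_a f^{c,\boldsymbol\lambda^\epsilon}_{ai}\Big)-d_{\max}\bound\\
&\geq K(\lambda_i^c+\epsilon)-d_{\max}\bound,
\end{align*}
where the last step is exactly the flow-conservation inequality satisfied by the decomposition $(f^{c,\boldsymbol\lambda^\epsilon}_{ij})$.

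The main obstacle is making rigorous the claim that the total shortfall from loaded outgoing tunnels is at most $d_{\max}F^{\max}$ over a $K$-window rather than growing with $K$. The honest argument is the per-session tunnel-backlog telescoping: for each outgoing tunnel $(i,b)$, $\tilde\mu^c_{ib}(t,\OR)\geq \tilde\phi^c_{ib}(t)-F^{\max}$ on every sample path, and separately $\mathbbm{E}_{\mathbf H}\{\tilde\phi^c_{ib}(t)\}$ must match the injection rate in steady state up to the same additive constant — but since we only need a one-sided bound and the $\OR$ injection expectation is already pinned to $f^{c,\boldsymbol\lambda^\epsilon}_{ib}P_{\mathbf H}(F_{ib}<T)$ per slot, the simplest fix is to note that whenever a slot has $F_{ib}\geq T$, by Lemma~\ref{lem:leaky} at least $\pacap\geq 1$ packets leave, so the number of such slots before $F_{ib}$ drops below $T$ again is at most $F^{\max}$ (as nothing enters meanwhile), and each ``re-loading'' requires injections that the accounting already credits; hence the cumulative unexpected zero-injection slots contribute a bounded-in-$K$ loss absorbable into $d_{\max}\bound$. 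I would present this via the $X^c_{ij}$ telescoping inequality, which is the same device the main proof already uses and therefore the least error-prone way to close the gap.
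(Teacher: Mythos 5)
Your overall architecture matches the paper's: compute the per-slot conditional expectation $\mathbbm{E}\{\mu_{ib}^c(\tau,\OR)\mid F_{ib}(\tau)<T\}=f_{ib}^{c,\boldsymbol\lambda^\epsilon}$, use the resulting upper bound $\mathbbm{E}_{\mathbf H}\{\tilde\mu_{ai}^c(t,\OR)\}\leq Kf_{ai}^{c,\boldsymbol\lambda^\epsilon}$ for incoming tunnels, a lower bound $Kf_{ib}^{c,\boldsymbol\lambda^\epsilon}-F^{\max}$ for outgoing tunnels, and close with flow conservation and the out-degree bound $d_{\max}$. However, there is a genuine gap exactly where you flag ``the main obstacle'': you never establish that the loaded slots in the $K$-window number $O(1)$ rather than $O(K)$. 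Your charging argument correctly bounds the length of a \emph{single} loaded excursion (output $\pacap$ per slot, no input, so at most $\lceil F^{\max}/\pacap\rceil$ slots), but the phrase ``each re-loading requires injections that the accounting already credits'' does not rule out the tunnel re-loading many times within the window; if it did re-load $\Theta(K)$ times, the expected injection deficit would be $\Theta(K f_{ib}^{c,\boldsymbol\lambda^\epsilon})$ and the $-d_{\max}\bound$ slack would not absorb it. The $X^c_{ij}$ telescoping you fall back on gives $\tilde\mu^c_{ib}\geq\tilde\phi^c_{ib}-F^{\max}$, which lower-bounds injections by \emph{departures}, not by $Kf_{ib}^{c,\boldsymbol\lambda^\epsilon}$, so it cannot close the argument either.

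The missing ingredient is an absorption property (the paper's Lemma~\ref{lem:absorption}): under $\OR$, once $F_{ij}(\tau_0)<T$ the tunnel stays below $T$ forever. The proof is a short contradiction: at any up-crossing slot $\tau''$ one would need $F_{ij}(\tau''-1)\geq T-\pacap>T_0$ (since $\OR$ injects at most $\pacap$ per slot and $T=T_0+\max\Cin$), so Lemma~\ref{lem:leaky} forces the output in slot $\tau''-1$ to equal $\pacap$, whence $F_{ij}(\tau'')<T-\pacap+\pacap=T$, a contradiction. With this, the loaded slots form a single initial prefix of length at most $\lceil(F^{\max}-T)/\pacap\rceil$, each unloaded slot contributes exactly $f_{ib}^{c,\boldsymbol\lambda^\epsilon}$ in expectation, and the total deficit per outgoing tunnel is at most $\lceil(F^{\max}-T)/\pacap\rceil\,\pacap<F^{\max}$ — which is precisely the $-d_{\max}\bound$ term after summing over at most $d_{\max}$ outgoing tunnels. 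Add this lemma and your proof is complete and essentially identical to the paper's.
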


$\OR$ is also designed to mimic the condition (\ref{eq:C1}) used by $\TB$. 
Because of  it, we can show that  $\TB$ compares favorably to $\OR$.

\begin{lemma}[$K$-slot comparison $\TB$ vs $\OR$]\label{lem:kslot} 
The $K$-slot policy comparison yields for all $(i,j)\in{\cal E}$
\begin{align}\label{eq:nathan}
&\mathbbm{E}_{\mathbf H}\left\{\sum_{c} \tilde\mu_{ij}^{c}(t,\TB)  \left[ Q_i^c(t)-Q_j^c(t)\right]  \right\} \\
&  \hspace{0.1in} \geq \mathbbm{E}_{\mathbf H}\left\{\sum_{c}\tilde \mu_{ij}^{c}(t,\OR)\left[ Q_i^c(t)-Q_j^c(t)\right] \right\} -K^2B_2,\notag
\end{align}
where $B_2\triangleq R_{\max} \left(2d_{\max}R_{\max}+A_{\max}\right)$ is a constant.
\end{lemma}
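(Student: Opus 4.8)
The plan is to prove \eqref{eq:nathan} path-wise, for every fixed realization of the randomized $\OR$ decisions, and to take the conditional expectation $\mathbbm{E}_{\mathbf H}\{\cdot\}$ only at the very end. Fix the tunnel $(i,j)$ and the reference slot $t$, and abbreviate $w_c(\tau)\triangleq Q_i^c(t+\tau)-Q_j^c(t+\tau)$ for $\tau=0,\dots,K$; the weights appearing in \eqref{eq:nathan} are the $w_c(0)$, and by the definition of $\tilde\mu$ it suffices to compare $\sum_{\tau=0}^{K-1}\sum_c\mu_{ij}^c(t+\tau,\TB)\,w_c(0)$ with $\sum_{\tau=0}^{K-1}\sum_c\mu_{ij}^c(t+\tau,\OR)\,w_c(0)$.

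\emph{Step 1 (slot-wise drift of the differential backlog).} From the router evolution \eqref{eq:Qevol}, in any slot the backlog $Q_i^c$ increases by at most $\sum_a\phi_{ai}^c(t')+A_i^c(t')\le d_{\max}R_{\max}+A_{\max}$ and decreases by at most $\sum_b\mu_{ib}^c(t')\le d_{\max}R_{\max}$; the same holds for $Q_j^c$. Hence $|w_c(\tau)-w_c(\tau-1)|\le 2d_{\max}R_{\max}+A_{\max}$, and telescoping gives $|w_c(\tau)-w_c(0)|\le \tau(2d_{\max}R_{\max}+A_{\max})$ for every session $c$ and every $\tau\le K$.

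\emph{Step 2 (slot-wise policy comparison at the current weights).} I claim that for each $\tau\in\{0,\dots,K-1\}$ and every sample path,
\[
\sum_c\mu_{ij}^c(t+\tau,\TB)\,w_c(\tau)\ \ge\ \sum_c\mu_{ij}^c(t+\tau,\OR)\,w_c(\tau).
\]
This is a short case analysis on $F_{ij}(t+\tau)$ relative to $T$. If $F_{ij}(t+\tau)\le T$, then $\TB$ injects $\Cin$ packets of a session attaining $\max_c w_c(\tau)$ whenever this maximum is positive, so its weighted injection equals $\Cin\,(\max_c w_c(\tau))^+$; the oracle injects at most $\pacap$ packets of a single session, hence its weighted injection is at most $\pacap\,(\max_c w_c(\tau))^+$, which is no larger since $\pacap\le\Cin$ (the input link is one of the tunnel's links, so its capacity is at least the tunnel bottleneck). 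If $F_{ij}(t+\tau)>T$, both policies inject nothing. If $F_{ij}(t+\tau)=T$, the oracle injects nothing (its threshold is strict) while $\TB$'s weighted injection is nonnegative. In all cases the inequality holds.

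\emph{Step 3 (transfer of weights and summation).} Writing $w_c(\tau)=w_c(0)+(w_c(\tau)-w_c(0))$ in both sums of Step 2, the slot-$\tau$ discrepancy between ``current'' and ``reference'' weights is at most
\[
\Big(\sum_c\mu_{ij}^c(t+\tau,\TB)+\sum_c\mu_{ij}^c(t+\tau,\OR)\Big)\max_c|w_c(\tau)-w_c(0)|\ \le\ 2R_{\max}\,\tau\,(2d_{\max}R_{\max}+A_{\max}),
\]
using that each policy injects at most $R_{\max}$ packets in a slot together with Step 1. Summing over $\tau=0,\dots,K-1$ bounds the total discrepancy by $2R_{\max}(2d_{\max}R_{\max}+A_{\max})\cdot\frac{K(K-1)}{2}\le K^2B_2$, which yields $\sum_c\tilde\mu_{ij}^c(t,\TB)\,w_c(0)\ge\sum_c\tilde\mu_{ij}^c(t,\OR)\,w_c(0)-K^2B_2$ on every sample path; taking $\mathbbm{E}_{\mathbf H}\{\cdot\}$ gives \eqref{eq:nathan}. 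I expect the main obstacle to be the bookkeeping in Step 2---in particular verifying the comparison at the boundary $F_{ij}(t+\tau)=T$ and the inequality $\pacap\le\Cin$---together with tracking the drift constant precisely enough in Steps 1 and 3 so that the accumulated error is exactly $K^2B_2$ rather than a larger multiple of $K^2$.
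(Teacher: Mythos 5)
There is a genuine gap in Step 2, and it sits exactly at the crux of this lemma. Your slot-wise comparison treats $F_{ij}(t+\tau)$ as a single quantity observed by both policies, but the two expectations in \eqref{eq:nathan} are taken over two \emph{different} trajectories launched from the common state $\mathbf{H}(t)$: $\tilde\mu_{ij}^{c}(t,\TB)$ is generated with the system evolving under $\TB$, while $\tilde\mu_{ij}^{c}(t,\OR)$ is generated with the system evolving under $\OR$. Because $\OR$ is \emph{state-dependent} (it injects only when its own tunnel backlog is below $T$), the classical trick of comparing the max-weight decision against a state-oblivious randomized policy at the current weights does not go through. Concretely, since $\TB$ injects up to $\Cin\geq\pacap$ per slot while $\OR$ injects at most $\pacap$, the $\TB$ trajectory's tunnel tends to be fuller; there will in general be slots $t+\tau$ where $F_{ij}(t+\tau)>T$ on the $\TB$ trajectory (so $\TB$ injects nothing) while $F_{ij}(t+\tau)<T$ on the $\OR$ trajectory (so $\OR$ injects $\pacap$ packets of a session whose weight $w_c(\tau)$ may be positive). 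In such a slot your claimed inequality reads $0\geq\pacap\,w_c(\tau)>0$ and fails. Your case ``if $F_{ij}(t+\tau)>T$, both policies inject nothing'' is therefore not available, and the per-slot domination on which Steps 2--3 rest collapses.

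The paper's proof is built precisely to get around this. It fixes the sample path of the $\TB$ trajectory, partitions the $K$ slots into overload and underload slots, and splits on the sign of $Q_i^c(t)-Q_j^c(t)$. When the reference weight is nonnegative, it groups each maximal run of overload slots together with the underload slot that precedes it into an ``overload subperiod,'' and uses Lemma~\ref{lem:leaky} (the loaded tunnel drains at exactly $\pacap$ per overload slot) plus the fact that the tunnel backlog does not decrease over the subperiod to show that $\TB$'s \emph{total} injection over the subperiod is at least $|{\cal T}_m|\pacap$, which dominates $\OR$'s total injection there; the deficit from the overload slots is paid for by the initiating underload slot. When the reference weight is negative, $\TB$'s zero injection in overload slots is trivially favorable. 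A separate argument handles the case where the window opens already in overload (there both policies are shown to coincide until the tunnel unloads). Your Steps 1 and 3 (the $2\tau B_2$ weight-transfer bookkeeping and the $\pacap\le\Cin$ observation for underload slots) match the paper's treatment of the underload slots, but without the subperiod accounting the overload slots are unhandled, so the proof as written does not establish \eqref{eq:nathan}.
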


\subsection{Completing the Proof}
We combine (\ref{eq:interim}) with Lemma \ref{lem:kslot} to get
\begin{align*}
&\Delta^{\text{\TB}}_K(t)- K^2B_1-\sum_{c,i}(K\lambda_i^c+d_{\max}F^{\max}) Q_i^c(t)\\
&\leq K^2 |{\cal E}|B_2- \sum_{c,(i,j)} \mathbbm{E}_{\mathbf H}\left\{\tilde\mu_{ij}^{c}(t,\OR)\left[Q_i^c(t)-Q_{j}^c(t)\right]  \right\},
\end{align*}
which can be rewritten as 
\begin{align*}
&\Delta^{\text{\TB}}_K(t)- K^2(|{\cal E}|B_2+B_1)-\sum_{c,i}(K\lambda_i^c+d_{\max}F^{\max}) Q_i^c(t)\\
&\leq  -\sum_{c,i} Q_i^c(t)\mathbbm{E}_{\mathbf H}\left\{\sum_b\tilde\mu_{ib}^{c}(t,\OR)-\sum_a\tilde\mu_{ai}^{c}(t,\OR)  \right\}\notag\\
&\leq -\sum_{c,i} Q_i^c(t)\left[K(\lambda_i^c+\epsilon)-d_{\max}\bound\right],\notag
\end{align*}
where in the last inequality we used Lemma \ref{cor:stat}. Hence, we finally get
\begin{equation}\label{eq:lastdrift}
\Delta^{\text{\TB}}_K(t)\leq K^2(|{\cal E}|B_2+B_1)-\sum_{c,i}\left[K\epsilon -2d_{\max}\bound\right] Q_i^c(t)
\end{equation}
Choose a finite $K>\frac{2d_{\max}\bound}{\epsilon}$ and define the positive constants 
$\eta\triangleq K^2(|{\cal E}|B_2+B_1)$ and $\xi\triangleq K\epsilon -2d_{\max}\bound $.
Then rewrite  (\ref{eq:lastdrift})  as
\[
\Delta^{\text{\TB}}_K(t)\leq \eta- \xi\sum_{c,i}Q_i^c(t),
\]
which completes the proof.
\end{proof}
Below we give the proofs for the technical lemmas \ref{cor:stat} and \ref{lem:kslot}.

\section{Proof of Lemma \ref{cor:stat}}
\newtheorem*{lem:4}{Lemma \ref{cor:stat}}
\begin{lem:4}[$\OR$ $K$-slot performance]
For any $\boldsymbol\lambda$ in the interior of the stability region we have
\begin{align}\label{eq:domin}
&\mathbbm{E}_{\mathbf H}\left\{ \sum_b\tilde\mu^{c}_{ib}(t,\OR)-\sum_a \tilde\mu_{ai}^{c}(t,\OR)\right\}\\
&\hspace{0.5in}\geq K(\lambda_i^c+\epsilon)-d_{\max}\bound , \text{ for all } i\in\mathcal{V}.\notag
\end{align}
\end{lem:4}
\begin{proof}[Proof of Lemma~\ref{cor:stat}]
First we will need a technical lemma, which states that  a non-loaded tunnel cannot become loaded under $\OR$. We emphasize that in the following lemma all  backlogs $F_{ij}(t)$ refer to the system evolution under $\OR$.
\begin{lemma}\label{lem:absorption}
Consider the system evolution on router edge $(i,j)$ under $\OR$ for the slots $t,t+1,\dots$ and suppose that $F_{ij}(t)$ is arbitrary. Suppose that for a time slot $\tau_0>t$  we have $F_{ij}(\tau_0)<T$, then 
\[ F_{ij}(\tau)<T, \quad \forall \tau>\tau_0. \]
\end{lemma}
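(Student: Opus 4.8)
\textbf{Proof plan for Lemma~\ref{lem:absorption}.}
The plan is to show that under $\OR$ a tunnel backlog can never jump from below $T$ to at least $T$ in a single slot, which by induction gives the claim. First I would recall the tunnel evolution \eqref{eq:queueFij}: $F_{ij}(\tau+1)\leq F_{ij}(\tau)-\sum_c\phi_{ij}^c(\tau)+\sum_c\mu_{ij}^c(\tau,\OR)$. The key observation is the way $\OR$ is defined in \eqref{eq:zeroallocation}--\eqref{eq:allocation}: whenever $F_{ij}(\tau)\geq T$ the policy injects nothing, and whenever $F_{ij}(\tau)<T$ the policy injects at most $\pacap$ packets in that slot (the routing function takes value either $\pacap$ or $0$). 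Since $\phi_{ij}^c(\tau)\geq 0$, this immediately bounds the one-slot increase of $F_{ij}$ by $\pacap$ when starting below $T$, and by $0$ when starting at or above $T$.

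Next I would run the induction on $\tau\geq\tau_0$. The base case $\tau=\tau_0$ is the hypothesis $F_{ij}(\tau_0)<T$. For the inductive step, assume $F_{ij}(\tau)<T$; I must show $F_{ij}(\tau+1)<T$. I would split into two cases according to whether $F_{ij}(\tau)>T_0$ or $F_{ij}(\tau)\leq T_0$. In the first case, Lemma~\ref{lem:leaky} applies (the tunnel is loaded in the sense $F_{ij}(\tau)>T_0$), so $\sum_c\phi_{ij}^c(\tau)=\pacap$, and since $\OR$ injects at most $\pacap$ we get $F_{ij}(\tau+1)\leq F_{ij}(\tau)-\pacap+\pacap=F_{ij}(\tau)<T$. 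In the second case $F_{ij}(\tau)\leq T_0$, and even ignoring the output we have $F_{ij}(\tau+1)\leq F_{ij}(\tau)+\pacap\leq T_0+\pacap$; since $T=T_0+\max_{(i,j)}\Cin$ by \eqref{eq:thres} and $\pacap\leq\Cin$ (the bottleneck capacity of a tunnel is at most its input link capacity — or more directly $\pacap\leq\max_{(i,j)}\Cin$), this gives $F_{ij}(\tau+1)\leq T_0+\max_{(i,j)}\Cin = T$; a strict inequality follows because one can use $F_{ij}(\tau+1)\le F_{ij}(\tau)+\sum_c\mu_{ij}^c(\tau,\OR)$ together with the fact that $F_{ij}(\tau)$ is integer-valued and strictly below $T_0$ is too crude — cleanly, if $F_{ij}(\tau)\le T_0$ then $F_{ij}(\tau)\le T-\max_{(i,j)}\Cin\le T-\pacap$ so $F_{ij}(\tau+1)\le T$, and I would note that it suffices to carry a non-strict bound here and then observe the strict bound re-emerges on the next step via the loaded-tunnel case, or simply adjust the threshold bookkeeping so that $<T$ is preserved exactly. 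Either way, in both cases $F_{ij}(\tau+1)<T$, completing the induction.

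The main obstacle I anticipate is the boundary bookkeeping between $\le T$ and $<T$: the loaded-tunnel Lemma~\ref{lem:leaky} only gives a clean output guarantee when $F_{ij}>T_0$, so the ``awkward band'' $T_0<F_{ij}(\tau)<T$ versus $F_{ij}(\tau)\le T_0$ must be handled separately, and one must be careful that the maximal injection $\pacap$ (not $\Cin$) is the right quantity under $\OR$, using $\pacap\le\max_{(i,j)}\Cin$ in \eqref{eq:thres}. Once these constants are lined up, the argument is a short two-case induction with no real computation.
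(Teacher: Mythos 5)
Your induction is the paper's argument read forwards. The paper proves the lemma by contradiction at the first up-crossing slot $\tau''$ (where $F_{ij}(\tau''-1)<T\le F_{ij}(\tau'')$), deduces $F_{ij}(\tau''-1)>T_0$ --- which is precisely the contrapositive of your sub-$T_0$ case --- and then invokes Lemma~\ref{lem:leaky} exactly as in your first case, pairing $\sum_c\phi_{ij}^c(\tau''-1)=\pacap$ with $\sum_c\mu_{ij}^c(\tau''-1,\OR)\le\pacap$ to show the backlog cannot cross $T$. So your two-case dichotomy around $T_0$ and your use of the loaded-tunnel lemma coincide with the paper's; only the packaging (forward induction versus contradiction at the first crossing) differs.

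The one genuine gap is your handling of strictness in the case $F_{ij}(\tau)\le T_0$, and neither of your proposed escapes works: carrying $F_{ij}(\tau+1)\le T$ and letting strictness ``re-emerge on the next step'' is not a proof of a claim that must hold at \emph{every} slot $\tau>\tau_0$ (if $F_{ij}(\tau+1)=T$ the lemma is already false at $\tau+1$), and ``adjusting the threshold bookkeeping'' amounts to changing the statement. The correct resolution is already on your page: by \eqref{eq:thres}, $T-\pacap>T_0$, i.e.\ $T_0+\pacap<T$, which turns your bound $F_{ij}(\tau+1)\le F_{ij}(\tau)+\pacap\le T_0+\pacap$ into the required strict inequality --- this is the very inequality the paper invokes when it writes $T+0-\pacap>T_0$. (Strictly speaking this margin is positive only when $\pacap<\max_{(i,j)}\Cin$; the published proof glosses over the equality case at exactly the same point, so you are no worse off, but you should cite \eqref{eq:thres} here rather than hedge.)
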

\begin{proof}[Proof of lemma \ref{lem:absorption}]
The proof is by contradiction. Suppose there exists $\tau'$ such that $F_{ij}(\tau')\geq T$ and $\tau'>\tau_0$. Then, there must exist a slot $\tau ''$ with $\tau'\geq \tau''>\tau_0$ where a transition occurred, such that $F_{ij}(\tau'')\geq T$ and $F_{ij}(\tau''-1)< T$ . Then use the facts $\sum_c \phi^c_{ij}(\tau)\geq 0$, $\sum_c \mu^c_{ij}(\tau,\OR)\leq \pacap$ which hold for any $\tau$, and (\ref{eq:queueFij}) to get
\begin{align*}
F_{ij}(\tau''\hspace{-0.08in}-1)&\geq F_{ij}(\tau'')+\hspace{-0.05in}\sum_c \phi^c_{ij}(\tau''\hspace{-0.08in}-1)-\hspace{-0.05in}\sum_c \mu^c_{ij}(\tau''\hspace{-0.08in}-1,\OR)\\
&\geq T+0-\pacap \stackrel{(\ref{eq:thres})}{>} T_0.
\end{align*}
Thus, since $F_{ij}(\tau''\hspace{-0.08in}-1)>T_0$ we may apply  Lemma~\ref{lem:leaky} on slot $\tau''-1$ to conclude that $\sum_c \phi^c_{ij}(\tau''-1)=\pacap$. Then combine with $\sum_c \mu^c_{ij}(\tau,\OR)\leq \pacap$  and (\ref{eq:queueFij}) again
\begin{align*}
F_{ij}(\tau'')&\leq F_{ij}(\tau''\hspace{-0.08in}-1)-\hspace{-0.05in}\sum_c \phi^c_{ij}(\tau''\hspace{-0.08in}-1)+\hspace{-0.05in}\sum_c \mu^c_{ij}(\tau''\hspace{-0.08in}-1,\OR)\\
&< T-\pacap+\pacap = T.
\end{align*}
which is  a contradiction.
\end{proof}
To prove Lemma~\ref{cor:stat}, we will first show that for any router edge $(i,j)$ it is
\begin{equation}\label{eq:ORefficiency}
Kf_{ij}^{c,\boldsymbol\lambda^{\epsilon}}-F^{\max}\leq   \mathbbm{E}_{ \mathbf{H}}\left\{\tilde\mu_{ij}^{c}(t,\OR) \right\} \leq Kf_{ij}^{c,\boldsymbol\lambda^{\epsilon}}
\end{equation}
We begin with  the RHS of (\ref{eq:ORefficiency}). 
For any slot $\tau$ in the observation period $\{t,t+1,\dots,t+K-1\}$, observe that if the value of $F_{ij}(\tau)$ is revealed,  $\mu_{ij}^{c}(\tau,\OR)$ does not depend further on $\mathbf{H}(t)$, 
i.e.,  $\mu_{ij}^{c}(\tau,\OR)$ and $\mathbf{H}(t)$ are conditionally mutually independent
and
 we may write
\begin{align}\label{eq:conditional}
&\mathbbm{E}_{ \mathbf{H}}\left\{\mu_{ij}^{c}(\tau,\OR)|F_{ij}(\tau)<T\right\}\notag\\ & \hspace{0.3in} \triangleq \mathbbm{E}\left\{\mu_{ij}^{c}(\tau,\OR)|F_{ij}(\tau)<T,\mathbf{H}(t)\right\}\notag\\& \hspace{0.3in}=\mathbbm{E}\left\{\mu_{ij}^{c}(\tau,\OR)|F_{ij}(\tau)<T\right\}.
\end{align}
Then, by the law of total expectation we have for $P(F_{ij}(\tau)<T|\mathbf{H}(t))>0$
\begin{align*}\label{eq:RHS}
&\mathbbm{E}_{ \mathbf{H}}\left\{\mu_{ij}^{c}(\tau,\OR)\right\}=\notag\\
&\hspace{0.28in}
=
P(F_{ij}(\tau)<T|\mathbf{H}(t))\mathbbm{E}_{ \mathbf{H}}\left\{\mu_{ij}^{c}(\tau,\OR)|F_{ij}(\tau)<T\right\}\notag\\
&\hspace{0.32in}
+
P(F_{ij}(\tau)\geq T|\mathbf{H}(t))\mathbbm{E}_{ \mathbf{H}}\left\{\mu_{ij}^{c}(\tau,\OR)|F_{ij}(\tau)\geq T\right\}\notag\\
&\hspace{0.24in}\stackrel{(\ref{eq:conditional})}{=}P(F_{ij}(\tau)<T|\mathbf{H}(t))\mathbbm{E}\left\{\mu_{ij}^{c}(\tau,\OR)|F_{ij}(\tau)<T\right\}\notag\\
&\hspace{0.24in}\stackrel{(\ref{eq:allocation})}{\leq} f_{ij}^{c,\boldsymbol\lambda^{\epsilon}},
\end{align*}
where we used $\mathbbm{E}\left\{\mu_{ij}^{c}(\tau,\OR)|F_{ij}(\tau)\geq T\right\}=0$ by definition of $\OR$.
For $P(F_{ij}(\tau)<T|\mathbf{H})=0$ we immediately get $\mathbbm{E}_{ \mathbf{H}}\left\{\mu_{ij}^{c}(\tau,\OR)\right\}= 0\leq f_{ij}^{c,\boldsymbol\lambda^{\epsilon}}$. 
 Summing up over all slots proves the RHS of \eqref{eq:ORefficiency}. 
 
 To prove the LHS of (\ref{eq:ORefficiency}) we will use Lemma \ref{lem:absorption}. First assume that the observation period starts with $F_{ij}(t)<T$. Then invoking Lemma \ref{lem:absorption} we conclude that $F_{ij}(\tau)<T$ for all $\tau=t,\dots,t+K-1$ for any realization of the system evolution. Then assume that  the observation period starts with $F_{ij}(t)>T$, by \eqref{eq:zeroallocation} we have $\sum_c\mu_{ij}^c(t,\OR)=0$ and it follows that the tunnel backlog monotonically decreases until it becomes less than $T$. Moreover, since $F_{ij}(t)<F^{\max}$, the maximum number of slots required to become smaller than $T$ is at most $\left\lceil \frac{F^{\max}-T}{\pacap}\right\rceil$. On the first slot when $F_{ij}(\tau)<T$, we can apply Lemma \ref{lem:absorption} again. Thus, combining the two cases, we conclude that for any realization we have
 \[
 F_{ij}(\tau)<T,\quad \text{for all } \tau=t+\left\lceil \frac{F^{\max}-T}{\pacap}\right\rceil,\dots, t+K-1.
 \]
 Let $\tau_1\triangleq \left\lceil \frac{F^{\max}-T}{\pacap}\right\rceil$, we have
 
 \begin{align*}
  \mathbbm{E}_{ \mathbf{H}}\left\{\tilde\mu_{ij}^{c}(t,\OR) \right\} &\geq \sum_{\tau=t+\tau_1}^{t+K-1}\mathbbm{E}_{ \mathbf{H}}\left\{\mu_{ij}^{c}(\tau,\OR) \right\} \\
  &\hspace{-0.65in}=\sum_{\tau=t+\tau_1}^{t+K-1} \mathbbm{E}_{ \mathbf{H}}\left\{\mu_{ij}^{c}(\tau,\OR)| F_{ij}(\tau)<T \right\}\\
    &\hspace{-0.65in}=(K-\tau_1)f_{ij}^{c,\boldsymbol\lambda^{\epsilon}}> Kf_{ij}^{c,\boldsymbol\lambda^{\epsilon}}-\tau_1\pacap\\
    &\hspace{-0.65in}= Kf_{ij}^{c,\boldsymbol\lambda^{\epsilon}}- \left\lceil \frac{F^{\max}-T}{\pacap}\right\rceil\pacap\\
      &\hspace{-0.65in}\geq Kf_{ij}^{c,\boldsymbol\lambda^{\epsilon}}- (F^{\max}-T+\pacap)> Kf_{ij}^{c,\boldsymbol\lambda^{\epsilon}}-F^{\max}
 \end{align*}
 where the last inequality follows from $T>\pacap$, see \eqref{eq:thres}. 
%
This proves \eqref{eq:ORefficiency}.
To complete the proof, 
we use the lower bound of eq.~(\ref{eq:ORefficiency}) for the first term and the upper bound for the second term, and use the fact that node's $i$ out-degree  is bounded above by the maximum node degree $d_{\max}$. 
\end{proof}

\section{Proof of Lemma~\ref{lem:kslot}}

\newtheorem*{lem:7}{Lemma \ref{lem:kslot}}
\begin{lem:7}[$K$-slot comparison $\TB$ vs $\OR$]
The $K$-slot policy comparison yields for all $(i,j)\in{\cal E}$
\begin{align}\label{eq:nathan}
&\mathbbm{E}_{\mathbf H}\left\{\sum_{c} \tilde\mu_{ij}^{c}(t,\TB)  \left[ Q_i^c(t)-Q_j^c(t)\right]  \right\} \\
&  \hspace{0.1in} \geq \mathbbm{E}_{\mathbf H}\left\{\sum_{c}\tilde \mu_{ij}^{c}(t,\OR)\left[ Q_i^c(t)-Q_j^c(t)\right] \right\} -K^2B_2,\notag
\end{align}
where $B_2$ is a constant given in eq.~(\ref{eq:B1}).
\end{lem:7}

\begin{proof}[Proof of Lemma \ref{lem:kslot}]
Fix some arbitrary router edge $(i,j)$, and  a time slot $t$. 
The concept of the proof is to examine the subsequent $K$ slots and compare $\TB$ to $\OR$ with respect to the products $\mathbbm{E}_{\mathbf H}\left\{\sum_{c}\tilde\mu_{ij}^{c}(t,\pi)  \left[ Q_i^c(t)-Q_j^c(t)\right]\right\}$, where $Q_i^c(t), Q_j^c(t)$ are fully determined by $\mathbf{H}(t)$, and $\tilde\mu_{ij}^{c}(t,\pi) \triangleq \sum_{\tau=0}^{K-1}\mu_{ij}^{c}(t+\tau,\pi)$ represents the decisions made by policy $\pi$ in the $K$-slot observation period starting at time $t$ and state $\mathbf{H}(t)$. 
To avoid a possible confusion, we note that  $Q_i^c(t+\tau),~\tau=0,\dots,K-1$ denote  backlogs  under the $\TB$ policy. 
Although the initial state is common to both policies, the evolution through the $K$-slot period might be different, see for example  Figure \ref{fig:trajectory}. 

We first make a few definitions that regard the sample path evolution of the system under $\TB$  within the observation period of slots ${\cal K}\triangleq \{t,\dots,t+K-1\}$. To make the notation compact, we define a random vector $S:\Omega\to \{0,1\}^K$ such that for any realization $\omega$ and any $t+\tau\in {\cal K}$ it is
\[
S_{\tau}(\omega)=\left\{\begin{array}{ll}
1 &  \text{ if } F_{ij}(t+\tau,\omega)>T\\
0 &  \text{ if } F_{ij}(t+\tau,\omega)\leq T.
\end{array}\right.
\]
Fix a sample path $\omega \in\Omega$. This corresponds to particular vector $S(\omega)$.
If $S_{\tau}=1$ we say that the slot $t+\tau$ \emph{is  overload}. 
Let ${\cal O}\subseteq {\cal K}$ be the set of all overload slots. Similarly if $S_{\tau}=0$, we say that the slot $t+\tau$ \emph{is underload} and denote the corresponding set with ${\cal U}={\cal K-O}$. 
We remark that these sets are realized for the specific sample path.
In the following, we will compare $\TB$ to $\OR$ for this sample path.

First we compare the two policies across underload slots, $t+\tau\in {\cal U}$. In such slots we have by $\TB$ design that
\begin{align}\label{eq:globalbnd}
&\sum_{c}\mu_{ij}^{c}(t+\tau,\text{\TB})  \left[ Q_i^c(t+\tau)-Q_j^c(t+\tau)\right]\geq \notag\\
&\hspace{0.4in} \sum_{c}\mu_{ij}^{c}(t+\tau,\OR)  \left[ Q_i^c(t+\tau)-Q_j^c(t+\tau)\right]\\
&\hspace{2.2in},~\forall t+\tau\in{\cal U}\notag
\end{align}
where we emphasize that $\mu_{ij}^{c}(t+\tau,\OR)$ is not decided based on $Q_i^c(t+\tau),Q_j^c(t+\tau)$.\footnote{This is because $Q_i^c(t+\tau),Q_j^c(t+\tau)$  are the backlogs at $t+\tau$ under $\TB$, but not necessarily under $\OR$.} Nevertheless the inequality holds since, given underload, $\TB$ is a universal maximizer for this quantity. 

We will need a bound for the largest backlog increase and decrease in $k$ slots. 
Let  $\delta Q_i^c(k) \triangleq  Q_i^c(t+k)-Q_i^c(t)$, we have
\begin{equation*}\label{eq:kslot}
-k \sum_{b\in\text{Out($i$)}} R_{ib} \leq \delta Q_i^c(k)\leq k (\sum_{a\in \text{In($i$)}} R_{ai}+A_{\max}).
\end{equation*}
which are independent of $t$. Also recall that 
 $R_{\max}$ is the maximum link capacity and $d_{\max}$ the maximum node degree on ${\cal G}_R$, and define
 \begin{equation}\label{eq:B1}
 B_2\triangleq R_{\max} \left(2d_{\max}R_{\max}+A_{\max}\right).\end{equation}
 It follows that $-kB_2\leq R_{\max} \left[\delta Q_i^c(k)-\delta Q_j^c(k)\right]\leq kB_2$.
 Also, note that under any policy $\pi$ it is $\sum_c \mu_{ij}^c(t+\tau,\pi)\leq R_{\max}$.
Then, on an underload slot  $t+\tau\in {\cal U}$, we have
\begin{align}\label{eq:underloadslots}
&\sum_{c}\mu_{ij}^{c}(t+\tau,\text{\TB})  \left[ Q_i^c(t)-Q_j^c(t)\right] \notag\\
&\hspace{0.1in}\geq \sum_{c}\mu_{ij}^{c}(t+\tau,\text{\TB})  \left[ Q_i^c(t+\tau)-Q_j^c(t+\tau)\right]-\tau B_2 \notag\\
&\hspace{0.08in} \stackrel{\text{(\ref{eq:globalbnd})}}{\geq}\sum_{c}\mu_{ij}^{c}(t+\tau,\OR)  \left[ Q_i^c(t+\tau)-Q_j^c(t+\tau)\right]-\tau B_2\notag \\
&\hspace{0.1in}= \sum_{c}\mu_{ij}^{c}(t+\tau,\OR)  [ Q_i^c(t)-Q_j^c(t)+\notag \\
&\hspace{1.7in}+\delta Q_i^c(\tau)-\delta Q_j^c(\tau)] -\tau B_2 \notag \\
&\hspace{0.1in}\geq \sum_{c}\mu_{ij}^{c}(t+\tau,\OR)  \left[ Q_i^c(t)-Q_j^c(t)\right]-2\tau B_2
\end{align}

\begin{figure}[t!]
\begin{center}
\begin{overpic}[scale=.35]{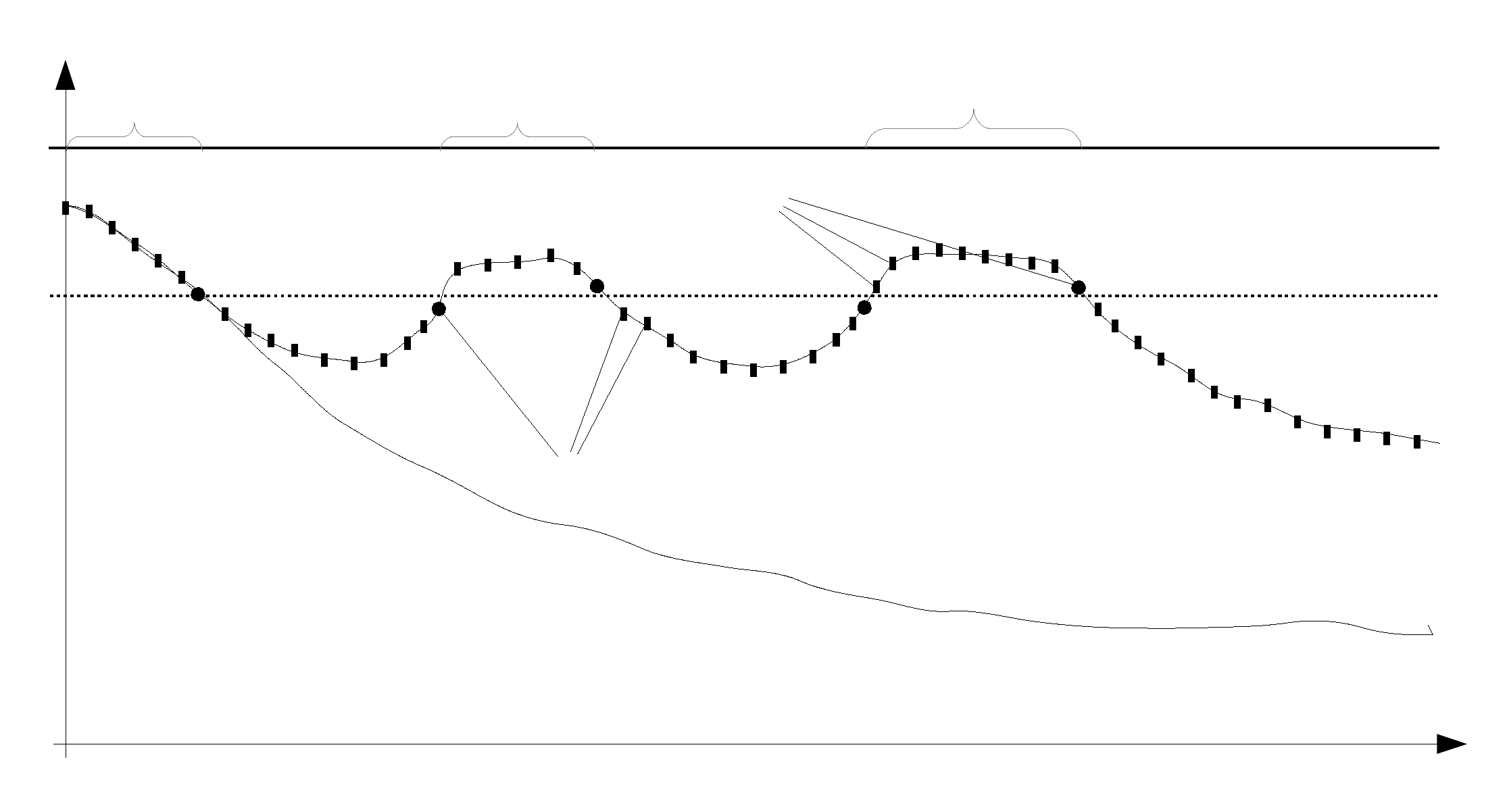}
\put(-7,47){\footnotesize $F_{ij}(\tau)$}
\put(4,-2){\footnotesize $t$}
\put(86,-2){\footnotesize $t+K-1$}
\put(34.5,21){\footnotesize $\in {\cal U}$}
\put(45,40){\footnotesize $\in {\cal O}$}
\put(7,48){\footnotesize ${\cal T}_1$}
\put(32,48){\footnotesize ${\cal T}_2$}
\put(63,48){\footnotesize ${\cal T}_3$}
\put(95,35){\footnotesize $T$}
\put(95,45){\footnotesize $F^{\max}$}
\put(83.5,28.3){\colorbox{white}{\textcolor{black}{\footnotesize $\TB$}}}
\put(82,15){\colorbox{white}{\textcolor{black}{\footnotesize $\OR$}}}
\end{overpic}
   \caption{Sample path comparison of the two policies over $K$ slots starting from the same state. We note an overload subperiod starts at a slot where $F_{ij}(\tau)<T$ (with the possible exemption of the first overload subperiod) and ends at a slot where $F_{ij}(\tau)>T$.}
      \vspace{-0.4in}
     \label{fig:trajectory}
\end{center}
\end{figure}

A similar  bound is derived previously in \cite{NMR05} to be applied to a $K$-slot comparison where the stationary policy does not depend on the backlog sizes.

Our plan is to derive a similar expression to \eqref{eq:underloadslots} for the overload slots.
To proceed with the plan, we develop an analysis which depends on  the  sign of $\left[ Q_i^c(t)-Q_j^c(t)\right]$ which is determined at the beginning of the $K$-slot period. If positive, we break the observation into overload subperiods ${\cal T}$ (to be defined shortly) and the remaining underload slots ${\cal K-T}$. If negative, then we study separately the overload slots ${\cal O}$ and the remaining underload slots ${\cal K-O}$. 

a) Assume first that the observed state $\mathbf{H}(t)$ is such that $\left[ Q_i^c(t)-Q_j^c(t)\right]\geq 0$. 
For this case, we use the concept of an \emph{overload subperiod}, which is a period of consecutive overload slots plus an initial underload slot.

We formally define the $m^{\text{th}}$ {overload subperiod} with length $L_m$  consisting of consecutive slots $\{t+\tau_1^m,\dots,t+\tau_{L_m}^m\}$, such that $S_{\tau_1^m}=S_{\tau_{L_m}^m+1}=0$ and $S_{\tau}=1,~\forall \tau\in\{\tau_2^m,\dots,\tau_{L_m}^m\}$. In words, an overload subperiod begins with one underload slot and ends with an overload slot, while all slots within the subperiod are overload and the slot after the subperiod is underload, see a representation of such an overload subperiod in Fig.~\ref{fig:trajectory}. Let ${\cal T}_m$ be the set of slots  comprising the $m^{\text{th}}$ overload subperiod for sample path under study. Suppose, that there are $Z(\omega)$ overload subperiods, where the random variable $Z$ takes values in $\{0,1,\dots,\lceil K/2 \rceil\}$. 
We also define ${\cal T}=\cup_{m=1}^{Z}{\cal T}_m$. Note that the sets ${\cal T}_m$ are disjoint, it is ${\cal T}\subseteq {\cal K}$, and ${\cal K}-{\cal T}\subseteq {\cal U}$.

By definition of the overload subperiod the backlog at the last slot is larger than at the first slot, hence for our chosen sample path we have
\begin{equation}\label{eq:bound1}
F_{ij}(t+\tau_{L_m}^m)-F_{ij}(t+\tau_1^m)>0,~~ \text{ for } m=2,3,\dots,Z.
\end{equation}
Let us now extend the definition of the overload subperiod to the  special case of the first subperiod. If the first slot of the observation period is overload, i.e., $S_{0}=1$, then the first overload subperiod starts at an overload slot (as opposed to the original definition) and completes at the last consecutive overload slot (similar to the original definition).\footnote{Similarly, if the last subperiod ends at an overload slot, then we do not have a followup underload slot-however this case does not affect our proof.}
This is a natural extension to the above definition of the overload subperiod. 
The backlog difference between last and first slot of the first overload subperiod is
\begin{align}
&F_{ij}(t+\tau_{L_1}^1)-F_{ij}(t)>0~~~~~\quad\quad \text{ if } S_0=0\label{eq:bound21}\\
&F_{ij}(t+\tau_{L_1}^1)-F_{ij}(t)>-\bound~~~~ \text{ if } S_0=1.\label{eq:bound2}
\end{align}

Now, let us examine the $m^{\text{th}}$ overload subperiod  of slots ${\cal T}_m$ for $m>1$, combining  (\ref{eq:bound1}) and (\ref{eq:kFeq2})  we have
\begin{align*}
\sum_{c,t+\tau\in {\cal T}_m}\hspace{-0.15in}\mu_{ij}^{c}(t+\tau,\text{\TB})&\geq \sum_{c,t+\tau\in {\cal T}_m}  \phi_{ij}^{c}(t+\tau)\\
&=| {\cal T}_m| \pacap \\
&\stackrel{\eqref{eq:allocation}}{\geq} \sum_{c,t+\tau\in {\cal T}_m}\mu_{ij}^{c}(t+\tau,\OR),
\end{align*}
where the equality follows from applying Lemma \ref{lem:leaky} to all slots in the overload subperiod (including the first).
Multiplying both sides with the positive quantity $\left[ Q_i^c(t)-Q_j^c(t)\right]$, we get for overload periods $m>1$ starting from a state with positive $\left[ Q_i^c(t)-Q_j^c(t)\right]$

\begin{align}\label{eq:overloadpos}
&\sum_{c,t+\tau\in {\cal T}_m}\mu_{ij}^{c}(t+\tau,\text{\TB})  \left[ Q_i^c(t)-Q_j^c(t)\right] \\
&\hspace{0.1in}\geq \hspace{-0.25in}\sum_{\hspace{0.25in}c,t+\tau\in {\cal T}_m}\hspace{-0.25in}\mu_{ij}^{c}(t+\tau,\OR)  \left[ Q_i^c(t)-Q_j^c(t)\right]\notag\\
&\hspace{0.1in}\geq \hspace{-0.2in}\sum_{\hspace{0.2in}c,t+\tau\in {\cal T}_m}\hspace{-0.2in}\mu_{ij}^{c}(t+\tau,\OR)  \left[ Q_i^c(t)-Q_j^c(t)\right]- \hspace{-0.25in}\sum_{\hspace{0.15in}t+\tau\in {\cal T}_m}\hspace{-0.15in}2\tau B_2\notag
\end{align}
where in the last step we intentionally relaxed the bound further to make it match (\ref{eq:underloadslots}).
For $m=1$ and $S_0=0$, we repeat the above approach using (\ref{eq:bound21}), and  (\ref{eq:overloadpos})  still holds. However, in case  $S_0=1$, i.e. the observation period starts in overload,  we  must replace  (\ref{eq:bound1}) with (\ref{eq:bound2}),  in which case the above approach breaks. Therefore we deal with this case in a different manner. In particular we will show that if our sample path has  $S_0=1$ then  for all time slots in the first overload subperiod $t+\tau\in{\cal T}_1$,  
\[
\sum_c\mu_{ij}^{c}(t+\tau,\text{\TB})=\sum_c\mu_{ij}^{c}(t+\tau,\OR)=0.
\]
Starting from the first slot $t$, and since $S_0=1\Leftrightarrow F_{ij}(t)>T$, observe that both policies $\TB,\OR$ will make the same decision $\sum_c\mu_{ij}^{c}(t,\pi)=0$. Then  (\ref{eq:kFeq2}) is satisfied with equality, and since  $\phi_{ij}^c(t)$ does not depend on the chosen policy, we have that $F_{ij}(t+1)$ is the same for both policies. This process is repeated for all slots in subperiod ${\cal T}_1$ consisting of  overload slots under $\TB$. Thus, we conclude that if the system is in the first overload period under $\TB$ with $S_0=1$, then it is also in the first overload period under $\OR$.
Therefore, for $t+\tau\in{\cal T}_1$, $S_0=1$ we have
\[
\sum_{c,t+\tau\in {\cal T}_1}\mu_{ij}^{c}(t+\tau,\text{\TB})=\sum_{c,t+\tau\in {\cal T}_1}\mu_{ij}^{c}(t+\tau,\OR)
\]
and (\ref{eq:overloadpos}) holds for this case as well. 
We conclude that (\ref{eq:overloadpos}) is true for all $m$ as long as $\left[ Q_i^c(t)-Q_j^c(t)\right]\geq 0$. 

Let $\mathbf{Q}^+_t$ denote the event $\left[ Q_i^c(t)-Q_j^c(t)\right]\geq 0$ and  $\mathbf{Q}^-_t$ the complement.
Observing that the remaining slots are underload ${\cal K-T}\subseteq {\cal U}$ and combining  with ineq. (\ref{eq:underloadslots}), we condition on the sample path $S=\bm s$ to get
\begin{align}\label{eq:Qpos}
& \mathbbm{E}_{\mathbf H}\left\{\sum_{c} \tilde\mu_{ij}^{c}(t,\TB)  \left[ Q_i^c(t)-Q_j^c(t)\right]\big |\mathbf{Q}^+_t,S=\bm s\right\} \notag\\
& =\mathbbm{E}_{\mathbf H}\left\{\hspace{-0.2in}\sum_{\hspace{0.2in}c,t+\tau\in {\cal T}}\hspace{-0.28in} \mu_{ij}^{c}(t+\tau,\TB)  \left[ Q_i^c(t)-Q_j^c(t)\right]\big |\mathbf{Q}^+_t,S=\bm s\right\} \notag\\
& +\mathbbm{E}_{\mathbf H}\left\{\hspace{-0.3in}\sum_{\hspace{0.25in}c,t+\tau\in {\cal K-T}} \hspace{-0.37in}\mu_{ij}^{c}(t+\tau,\TB)  \left[ Q_i^c(t)-Q_j^c(t)\right]\big |\mathbf{Q}^+_t,S=\bm s\right\} \notag\\
&\hspace{-0.05in}\stackrel{\text{(\ref{eq:underloadslots})\&(\ref{eq:overloadpos})}}{\geq} \mathbbm{E}_{\mathbf H}\left\{\sum_{c} \tilde\mu_{ij}^{c}(t,\OR)  \left[ Q_i^c(t)-Q_j^c(t)\right]\big |\mathbf{Q}^+_t,S=\bm s\right\} \notag\\
&-\sum_{t+\tau\in {\cal K}}2\tau B_2
\end{align}

b) Next we study the case where the observation period starts with $\left[ Q_i^c(t)-Q_j^c(t)\right]< 0$ and we examine the  overload slots. Since $\TB$ refrains from transmission in these slots, we have 
\[
\sum_c\mu_{ij}^{c}(t+\tau,\text{\TB})=0\leq \sum_c\mu_{ij}^{c}(t+\tau,\OR), ~\forall t+\tau\in {\cal O},
\]
multiplying with the negative quantity $\left[ Q_i^c(t)-Q_j^c(t)\right]$ we get 
\begin{align}\label{eq:overloadneg}
&\sum_{c,t+\tau\in  {\cal O}}\mu_{ij}^{c}(t+\tau,\text{\TB})  \left[ Q_i^c(t)-Q_j^c(t)\right] \notag\\
&\hspace{0.1in}\geq \hspace{-0.1in}\sum_{c,t+\tau\in  {\cal O}}\hspace{-0.1in}\mu_{ij}^{c}(t+\tau,\OR)  \left[ Q_i^c(t)-Q_j^c(t)\right]\notag\\
&\hspace{0.1in}>\hspace{-0.1in} \sum_{c,t+\tau\in  {\cal O}}\hspace{-0.1in}\mu_{ij}^{c}(t+\tau,\OR)  \left[ Q_i^c(t)-Q_j^c(t)\right]-2\tau B_2.
\end{align}
\noindent Combining with (\ref{eq:underloadslots}) we obtain
\allowdisplaybreaks
\begin{align}\label{eq:Qneg}
&\mathbbm{E}_{\mathbf H}\left\{\sum_{c} \tilde\mu_{ij}^{c}(t,\TB)  \left[ Q_i^c(t)-Q_j^c(t)\right]\big |\mathbf{Q}^-_t,S=\bm s\right\} \notag\\
& =\mathbbm{E}_{\mathbf H}\left\{\hspace{-0.2in}\sum_{\hspace{0.15in}c,t+\tau\in {\cal O}}\hspace{-0.22in} \mu_{ij}^{c}(t+\tau,\TB)  \left[ Q_i^c(t)-Q_j^c(t)\right]\big |\mathbf{Q}^-_t,S=\bm s\right\} \notag\\
& +\mathbbm{E}_{\mathbf H}\left\{\hspace{-0.3in}\sum_{\hspace{0.25in}c,t+\tau\in {\cal K-O}} \hspace{-0.35in}\mu_{ij}^{c}(t+\tau,\TB)  \left[ Q_i^c(t)-Q_j^c(t)\right]\big |\mathbf{Q}^-_t,S=\bm s\right\} \notag\\
&\hspace{-0.01in}\stackrel{\text{(\ref{eq:underloadslots})\&(\ref{eq:overloadneg})}}{\geq} \mathbbm{E}_{\mathbf H}\left\{\sum_{c} \tilde\mu_{ij}^{c}(t,\OR)  \left[ Q_i^c(t)-Q_j^c(t)\right]\big |\mathbf{Q}^-_t,S=\bm s\right\} \notag\\
&-\sum_{t+\tau\in {\cal K}}2\tau B_2
\end{align}
In conclusion, depending on the  sign of $\left[ Q_i^c(t)-Q_j^c(t)\right]$, we either break the observation into overload subperiods ${\cal T}$ and  remaining underload slots ${\cal K-T}$ to use (\ref{eq:overloadpos}) and  (\ref{eq:underloadslots}), or  we study separately the overload slots ${\cal O}$ and the remaining underload slots ${\cal K-O}$ using (\ref{eq:overloadneg}) and (\ref{eq:underloadslots}). 
Note that $K^2>K(K-1)\triangleq \sum_{\tau=0}^{K-1} 2\tau=\sum_{t+\tau\in {\cal K}}2\tau$.
Hence
\begin{align}\label{eq:allcases1}
&\hspace{-0.01in} \mathbbm{E}_{\mathbf H}\left\{\sum_{c} \tilde\mu_{ij}^{c}(t,\TB)  \left[ Q_i^c(t)-Q_j^c(t)\right]\big |S=\bm s\right\} \notag\\
& =\mathbbm{E}_{\mathbf H}\left\{\sum_{c} \tilde\mu_{ij}^{c}(t,\TB)  \left[ Q_i^c(t)-Q_j^c(t)\right]\big |\mathbf{Q}^+_t,S=\bm s\right\} \notag\\
& +\mathbbm{E}_{\mathbf H}\left\{\sum_{c} \tilde\mu_{ij}^{c}(t,\TB)  \left[ Q_i^c(t)-Q_j^c(t)\right]\big |\mathbf{Q}^-_t,S=\bm s\right\} \notag\\
&\hspace{-0.01in}\stackrel{\text{(\ref{eq:Qpos})\&(\ref{eq:Qneg})}}{>}  \hspace{-0.01in}\mathbbm{E}_{\mathbf H}\left\{\sum_{c} \tilde\mu_{ij}^{c}(t,\OR)  \left[ Q_i^c(t)-Q_j^c(t)\right]\big |S=\bm s\right\}\notag\\
& -K^2 B_2.
\end{align}
Let ${\cal S}=\{\bm s:\mathbf{H}(t)\cap (S=\bm s) \neq \emptyset\}$, we have
\begin{align*}
&\hspace{-0.03in} \mathbbm{E}_{\mathbf H}\left\{\sum_{c} \tilde\mu_{ij}^{c}(t,\TB)  \left[ Q_i^c(t)-Q_j^c(t)\right]\right\} \notag\\
&=\sum_{\bm s\in {\cal S}}P(S=\bm s|\mathbf{H}(t))\\
&\hspace{0.4in} \times \mathbbm{E}_{\mathbf H}\left\{\sum_{c} \tilde\mu_{ij}^{c}(t,\TB)  \left[ Q_i^c(t)-Q_j^c(t)\right]\big |S=\bm s\right\} \notag\vspace{-0.2in}\\
&\hspace{-0.05in}\stackrel{(\ref{eq:allcases1})}{\geq} \sum_{\bm s\in {\cal S}}P(S=\bm s|\mathbf{H}(t))\vspace{-0.2in}\\
&\hspace{0.4in} \times\mathbbm{E}_{\mathbf H}\left\{\sum_{c} \tilde\mu_{ij}^{c}(t,\OR)  \left[ Q_i^c(t)-Q_j^c(t)\right]\big |S=\bm s\right\} \notag\\
&\hspace{0.4in}- \sum_{\bm s\in {\cal S}}P(S=\bm s|\mathbf{H}(t))K^2B_2\notag\\
&=\mathbbm{E}_{\mathbf H}\left\{\sum_{c} \tilde\mu_{ij}^{c}(t,\OR)  \left[ Q_i^c(t)-Q_j^c(t)\right]\right\} -K^2B_2.
\end{align*}
\end{proof}

\end{document}